\DeclareMathOperator*{\argmin}{arg\,min}
\newtheorem{theorem}{Theorem}
\newtheorem{lemma}[theorem]{Lemma}
\def\BibTeX{{\rm B\kern-.05em{\sc i\kern-.025em b}\kern-.08em
    T\kern-.1667em\lower.7ex\hbox{E}\kern-.125emX}}
\def\BibTeX{{\rm B\kern-.05em{\sc i\kern-.025em b}\kern-.08em
    T\kern-.1667em\lower.7ex\hbox{E}\kern-.125emX}}
\begin{document}
\title{A Multi-Layered Distributed Computing Framework for Enhanced Edge Computing}

\author{Ke Ma,~\IEEEmembership{Student Member,~IEEE,}
Junfei Xie,~\IEEEmembership{Senior Member,~IEEE}

\thanks{Manuscript received Oct 2, 2024.}
\thanks{Ke Ma is with the Department of Electrical and Computer Engineering, University of California, San Diego, 	92037, USA, and also with the Department of Electrical and Computer Engineering, San Diego State University, 92115, USA. (e-mail:kem006@ucsd.edu)}
\thanks{Junfei Xie is with the Department of Electrical and Computer Engineering, San Diego State University, 92115, USA. (e-mail:jxie4@sdsu.edu)}
}

\markboth{IEEE Transactions on vehiclar Technology,~Vol.xx, No.xx, Oct~2024}%
{How to Use the IEEEtran \LaTeX \ Templates}

\maketitle

\begin{abstract}
The rise of the Internet of Things and edge computing has shifted computing resources closer to end-users, benefiting numerous delay-sensitive, computation-intensive applications. To speed up computation, distributed computing is a promising technique that allows parallel execution of tasks across multiple compute nodes.  However, current research predominantly revolves around the master-worker paradigm, limiting resource sharing within one-hop neighborhoods. This limitation can render distributed computing ineffective in scenarios with limited nearby resources or constrained/dynamic connectivity. In this paper, we address this limitation by introducing a new distributed computing framework that extends resource sharing beyond one-hop neighborhoods through exploring layered network structures. Our framework involves transforming the network graph into a sink tree and formulating a joint optimization problem based on the layered tree structure for task allocation and scheduling. To solve this problem, we propose two exact methods that find optimal solutions and three heuristic strategies to improve efficiency and scalability. The performances of these methods are analyzed and evaluated through theoretical analyses and comprehensive simulation studies. The results demonstrate their promising performances over the traditional distributed computing and computation offloading strategies.
\end{abstract}

\begin{IEEEkeywords}
Edge computing, Distributed computing, Multi-hop offloading.
\end{IEEEkeywords}

\section{Introduction}
\IEEEPARstart{T}{he} proliferation of the Internet of Things (IoT) devices has enabled a multitude of delay-sensitive yet computation-intensive applications, such as face recognition, gaming, environment monitoring, and augmented/virtual reality \cite{chen2014vision}. The surge of these applications drives the migration of computing resources from the remote cloud to the network edge closer to end-users \cite{chen2018dynamic}. While computing at the edge offers compelling benefits, such as low latency, cost effectiveness, and improved data control and security, it also presents notable challenges. The distributed nature of edge servers, along with their inherent constraints in computing power, memory capacity, and available bandwidth when compared to the cloud, pose significant challenges to achieving high-performance edge computing \cite{shi2016edge}. 

To speed up computation, distributed computing can be employed.  
Existing distributed computing strategies typically adopt a \textit{master-worker} paradigm \cite{linderoth2000metacomputing}, where a single master node partitions and distributes the task to multiple worker nodes that are directly connected to it. 
Although the master-worker paradigm is simple to implement, it restricts resource sharing within one-hop neighborhoods. In the edge computing paradigm with constrained computing nodes, scenarios may happen where the residual resources at nearby edge servers are very limited or even less than those at the master node, rendering such master-worker based distributed computing ineffective. This challenge becomes particularly pronounced in edge networks with restricted or dynamic connectivity, such as networked airborne computing systems comprised of drones serving as edge servers \cite{lu2019toward,zhang2023exploring}.

In this paper, we overcome these challenges by exploring resources at distant servers located multiple hops away. While a similar idea has been explored in the mobile edge computing (MEC) \cite{kuang2019partial, tang2020partial, liu2023rfid, 9435782, you2016energy, barbarossa2013joint, li2018deep}  and Internet of Vehicle \cite{chen2022multihop, liu2022mobility, zhao2024asynchronous, deng2020multi} domains, where computation offloading is proposed to address users' and vehicles' computing demands, most existing studies focus on offloading tasks to a single MEC server located one or multiple hops away, with intermediate servers serving solely as relays. 
A few recent works \cite{zhang2022result, xie2022dynamic, zhang2022joint, funai2019computational} 
have considered offloading tasks to multiple servers, but they overlook the task scheduling issue addressed in this work and offer only heuristic solutions. 
To the best of our knowledge, this is the first systematic investigation into computation offloading to multiple servers across multiple hops and into the benefits of layered structures for improving distributed computing performance. Additionally, we present efficient exact solutions for optimal task allocation and scheduling, which generalize existing solutions and are applicable to a wide range of networked computing networks.

The main contributions are summarized as follows:
\begin{itemize}
\item \textit{A new multi-layered distributed computing framework:} The proposed framework explores layered network structures  to fully utilize the capacity of the entire edge computing network for enhanced system performance. It transforms the network graph into a sink tree, based on which a mixed integer programming (MIP) problem is formulated to jointly optimize task allocation and scheduling. Notably, this framework can be applied to any networked computing systems such as cloud computing, MEC, and networked airborne computing systems.  
\item \textit{Two exact methods that find optimal solutions: }To solve the optimization problem, we 
propose two exact methods, including a centralized method and a parallel enhancement of it. 
We also present an offline-online computation scheme that allows both methods to execute in real-time and 
handle dynamic and mobile networks. How these methods generalize existing solutions is also discussed.
\item \textit{Three heuristic methods for improving efficiency and scalability:} While the two exact methods offer optimality, their execution time grows rapidly as the network expands. To mitigate this challenge, we introduce two worker selection methods, as well as a genetic algorithm for efficiently finding near-optimal solutions. 
\item \textit{Comprehensive simulation studies: } 
To evaluate the performance of the proposed approaches, we conduct extensive comparison studies. Our results demonstrate that enabling resource sharing within the entire network leads to better solutions compared to those found by the traditional distributed computing and computation offloading strategies. Additionally, simulations are conducted to assess the time efficiency of the proposed approaches and the impact of their key parameters. 

\end{itemize}

It should be noted that the multi-layered distributed computing framework was initially introduced in a short conference version \cite{10622383}, which presented a different heuristic method for solving the MIP problem. This paper provides a more comprehensive and systematic investigation, featuring a set of new and rigorously analyzed methods.   

The rest of the paper is organized as follows. Sec.  \ref{sec:related_work} discusses related works. Sec. \ref{sec:modeling} describes the system model and the problem to be solved. Sec. \ref{sec:method} introduces the proposed multi-layered distributed computing framework. The two exact methods and the three heuristic methods are introduced in Sec. \ref{sec:Optimal Method} and Sec. \ref{sec:Heuristic Methods}, respectively. Sec. \ref{sec:simulation} presents the results of simulation studies. Finally, Sec. \ref{sec:conclusion} concludes the paper and discusses the future works.

\section{Related Works} \label{sec:related_work}
This section reviews existing studies related to this work. 
\subsection{Distributed Computing}
In the field of distributed computing, the master-worker paradigm has been widely used to implement parallel applications \cite{linderoth2000metacomputing, 9998108,wang2021batch,wang2021multi,zhang2023communication}. In this paradigm, multiple workers share the workload assigned by the master and communicate directly with the master. To determine the optimal task allocation, various distributed computing strategies have been proposed. For instance, traditional server-based distributed computing systems often divide the workload among workers equally or proportionally according to workers' computing power \cite{wang2021batch}. In heterogeneous systems or those with mobile compute nodes, stragglers, which are nodes with long response times, are common and can significantly degrade system performance. To mitigate the impact of stragglers, coded distributed computing techniques \cite{9998108,wang2021batch,wang2021multi} have recently become increasingly popular. These techniques leverage coding theory to introduce redundancies into computations.

Another popular paradigm is the hierarchical master-worker paradigm \cite{aida2003distributed}, which involves a supervisor process managing multiple sets of processes, each consisting of a master process and multiple worker processes. It offers several advantages over the traditional master-worker paradigm, including improved scalability and fault tolerance \cite{aida2003distributed}. Differing from these paradigms,  we investigate a multi-layer master-worker paradigm that is composed of a single master and multiple workers operating at different layers. 

\subsection{Computation Offloading}
In the computing offloading domain, most existing studies consider a single-hop single-server offloading paradigm, where tasks are offloaded from users to a single edge server within their communication range \cite{kuang2019partial, tang2020partial,liu2023rfid,li2018deep,9435782,barbarossa2013joint,you2016energy}. 
The tasks can be offloaded as a whole or partially, known as \textit{binary offloading} and \textit{partial offloading}, respectively. 
Under this paradigm, many algorithms have been designed to make the optimal offloading decisions. For instance, studies in \cite{kuang2019partial,tang2020partial} examine the scenario where tasks are offloaded from a single user to a single nearby server. \cite{liu2023rfid} extends this analysis by taking server mobility and task dependency into account.
There have also been studies that explore tasks from multiple users, optimizing not only the offloading decisions (whether to offload a task or determining the offloading ratio) but also the allocation of resources to each user. For instance, \cite{li2018deep} considers the allocation of computing resources, while \cite{9435782,barbarossa2013joint} addresses the allocation of both computing and transmission power resources. The allocation of communication resources, including time slots under the Time Division Multiple Access protocol and sub-channels under the Orthogonal Frequency-Division Multiple Access (OFDMA) protocol, is considered in \cite{you2016energy}. These problems are typically solved using numerical approaches \cite{you2016energy, barbarossa2013joint}. Reinforcement learning has also emerged as a promising tool for computation offloading \cite{9435782,li2018deep}.

In practical scenarios, it is possible that there are no (powerful) edge servers nearby for the end users. To address this limitation, researchers have started to explore multi-hop offloading, 
enabling the offloading of tasks from users to remote servers multiple hops away. 
Along this direction, existing studies mostly consider offloading tasks to a single server, as seen in \cite{hong2019multi, wang2020energy, chen2022multihop, liu2022mobility, zhao2024asynchronous, deng2020multi}. 
Additionally, the three-tier network topology comprising end users, edge servers and cloud servers \cite{qi2024bridge, liu2023joint, sun2023vehicular, almuseelem2023energy} has garnered considerable interest. In their approach, tasks are offloaded either to a nearby edge server one hop away or to a cloud server two hops away, with edge servers acting as relays. In contrast, we consider all servers reachable by users and aim to facilitate collaborative computing among them. 

There are also several works that investigate partitioning tasks into multiple parts and offloading these parts to multiple servers, which are most relevant to our study \cite{zhang2022result, zhang2022joint, funai2019computational,xie2022dynamic}. For instance, \cite{zhang2022result} investigates the joint routing and multi-part offloading for both data and result. It employs a flow model to capture data/result traffic and introduces a distributed algorithm that finds optimal solutions in polynomial time. \cite{xie2022dynamic} formulates the multi-hop offloading problem as a potential game. By dividing tasks into subtasks of equal size, each device independently decides the number of subtasks to forward or compute based on its economic utility. The study in \cite{funai2019computational} addresses the distribution of a set of tasks, partitioned from a complex application, to multiple cooperative servers that may be multiple hops away. This problem is formulated as a task assignment problem and solved by an iterative algorithm. Another relevant work is presented in \cite{zhang2022joint}, which considers a joint user association, channel allocation, and task offloading problem. It solves this problem by combining the genetic algorithm and deep deterministic policy gradient algorithm. 

Distinct from previous research, we delve into the essential benefits of layered network structures while investigating how network properties like topology and server resources affect system performance. We also address the task scheduling problem that arises when transmissions of subtasks share channels or relays, which has been overlooked by existing works. Moreover, we propose both exact and heuristic methods to solve the problem, and introduce an offline-online computation scheme to enable real-time implementation and make them adaptable to dynamic and mobile networks.

\section{System Model and Problem Description} \label{sec:modeling}
In this section, we first present the system model and then describe the problem to be solved.

\subsection{System Model}
Consider a network formed by $N+1$ edge servers, each with its own unique set of computing and communication capabilities. The servers can share resources with their neighbors either through cables in wired networks or wirelessly when they are within communication range. Additionally, a server can communicate with its one-hop neighbors simultaneously using techniques like Orthogonal Frequency Division Multiplexing \cite{hwang2008ofdm}. For simplicity, interference among the servers is not considered in this study. The entire system is supervised and managed by a control center (e.g., a software defined networking controller \cite{kreutz2014software}) to ensure that all tasks are completed efficiently and effectively.

Suppose one of the servers, referred to as \textit{master}, needs to execute a computation-intensive task that is arbitrarily decomposable, which could be generated by the server itself  or requested by a user nearby. 
To complete the task in a timely and energy-efficient manner, the master decomposes the task into subtasks and distributes them to the other servers, referred to as \textit{workers} (see Fig. \ref{fig:NetworkTopology} for an illustration). 
The master (highlighted with red) can transmit subtasks simultaneously to their neighboring servers.  However, for workers farther away, multi-hop routing is required, 
which means that each server in the network can act as a worker, a relay, or both. When a subtask arrives at a relay, it is added to a queue and processed in a first-in-first-out order. A worker will not start executing the assigned subtask until it receives the complete subtask package. When a server acts as both a worker and a relay, it can perform the relay process and execute the assigned task simultaneously. 

\begin{figure}[h]
\vspace{-0.3cm}
    \centering
    \includegraphics[width=7cm]{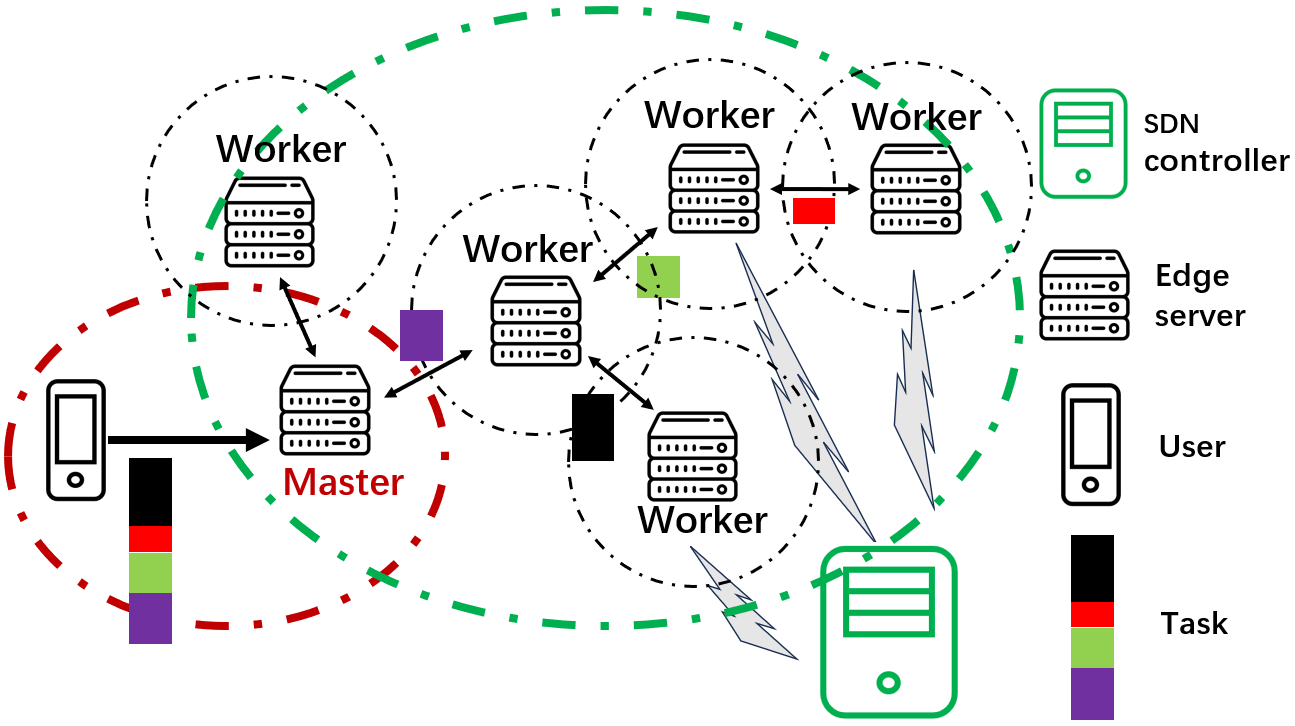}
    \caption{Network scenario.}
    \label{fig:NetworkTopology}
    \vspace{-0.3cm}
\end{figure}

In this preliminary study, we adopt several common assumptions made in existing studies \cite{8434285,li2018learning} to simplify our analysis. In particular, we assume that the network is stable with no package losses or retransmissions. Additionally, we assume that the computation result is relatively small, and hence the delay incurred in transmitting the result from workers back to the master is negligible. Under these assumptions, we model the network as follows.

\subsubsection{Network Model}
The network is modeled as a directed graph $\mathcal{G}=\{\mathcal{N}, \mathcal{E}\}$, where $\mathcal{N}=\{i|0\leq i \leq N\}$ is the set of edge servers and $\mathcal{E} = \{(i,j)|i,j\in \mathcal{N},i\neq j\}$ is the set of server-to-server communication links that connect servers that can communicate directly. 

\subsubsection{Computing model}
Let $f_i$ denote the computing capacity of server $i$, i.e., CPU-cycle frequency (GHz).  Given a task of size $y$, let $b$ denote the total number of CPU cycles required to process one task size unit. The time required for server $i$ to process this task can then be expressed by \cite{chen2022joint}:
\begin{equation}
    T_i^{comp} = \frac{yb}{f_i} \label{eq:computing_time}
\end{equation}

\subsubsection{Communication Model} 
Let $R_{i,j}$ denote the data transmission rate from server $i$ to server $j$, which we assume to be known for the sake of simplifying analysis. This rate can be approximated using the Shannon's Theory \cite{zhang2023exploring} as
 $  R_{i,j} = B_{i,j}log_2(1+\frac{s_{i,j}}{n_{i,j}})$, 
where $B_{i,j}$ is the channel bandwidth, and $s_{i,j}$ and $n_{i,j}$ represent the signal power and noise power, respectively. 

\subsubsection{Energy Consumption Model}
The energy consumed for executing a task mainly constitutes two components: energy consumed for computing and energy consumed for communication. 
The energy consumed for server $i$ to compute a task of size $y$ is given by \cite{8516294}:
    \begin{equation}
        E_i^{comp} = \gamma_i yb(f_i)^2 \label{eq:comp_energy}
    \end{equation}
where $\gamma_i$ is the effective switched capacitance that 
depends on the chip architecture of server $i$. 
The energy consumed for server $i$ to transmit a task of size $y$ to server $j$ is given by \cite{8516294}:
\begin{equation}
    E_{i,j}^{comm}=\frac{e_iy}{R_{i,j}} \label{eq:comm_energy}
\end{equation}
where $e_i$ represents the transmission power of server $i$.

\subsection{Problem Description and Analysis}
Without loss of generality, suppose the master receives a task of size $Y \in \mathrm{R}^+$ to complete. Given the computing and communication characteristics of the whole network, i.e., $\mathcal{G}$, $\{f_i, R_{i,j}, e_i,\gamma_i\}, \forall i, j \in \mathcal{N}$ 
are known, the control center aims to jointly minimize the task completion time and energy consumption by partitioning the task into small subtasks and distributing them to other servers in the network. 

Finding the optimal solution to this problem is nontrivial and challenging since it requires making decisions on several aspects, including identifying which servers the master should assign subtasks to, determining the amount of workload to be assigned to each worker, and selecting the transmission route for sending the subtask. Moreover, the order in which the subtasks should be sent by the master is also 
a crucial decision to make.

\section{Multi-Layered Distributed Computing Framework} \label{sec:method}
In this section, we present a multi-layered distributed computing framework to solve the problem described in the previous section. Motivated by the fact that a layered tree structure emerges when the master distributes tasks to other servers in the network, this framework first transforms the network graph into a sink tree and exploits this layered tree structure to find optimal task allocation and scheduling solutions. 

\subsection{Transforming Graph into a Sink Tree}
Given the network graph $\mathcal{G}$ and the characteristics of the servers and communication links forming the graph, we can find the shortest route from the master to each of the other servers in the network that takes the minimum time to transmit one bit of data. This can be achieved by defining the weight of each edge $(i,j)$ as the inverse of the associated data transmission rate, i.e., $1/R_{i,j}$, and applying the Dijkstra's algorithm \cite{dijkstra2022note} to find the most communication-efficient path. The resulting paths can then be used to construct a K-ary sink tree $\mathcal{T}$, where the master is the root node and all other servers are leaf or internal nodes reachable from the root via a unique path. This layered tree structure enables the distribution of tasks from the master to other servers in an efficient manner. 

To facilitate subsequent analysis, we re-label the nodes in the tree $\mathcal{T}$ level-by-level from the root downward, and from left to right within each level (see Fig.~\ref{fig:illustration}). Consequently, nodes in lower levels have larger indices. Let $\mathcal{I}_{l}$ denote the set of indices of nodes in level $l \in\{0,1,\ldots, H\}$, where $H$ is the height of the tree. Then, $\cup_{l=0}^H \mathcal{I}_l= \mathcal{N}$. Notably, the master (root) can transmit subtasks to its one-hop neighbors, i.e., nodes in Level $1$, simultaneously. 
However, if any one-hop neighbor has children, the subtasks assigned to them, including the one-hop neighbor, have to be transmitted one by one. 
This is because they share the same channel between the master and the one-hop neighbor, and the data arriving at the one-hop neighbor is processed in a first-in-first-out manner. Therefore, the order in which these subtasks should be sent matters.  Based on these analyses, we next formulate a joint task  allocation and scheduling problem as a mixed integer programming (MIP) model. 

\begin{figure}[b]
\vspace{-0.3cm}
    \centering
    \includegraphics[scale=0.1]{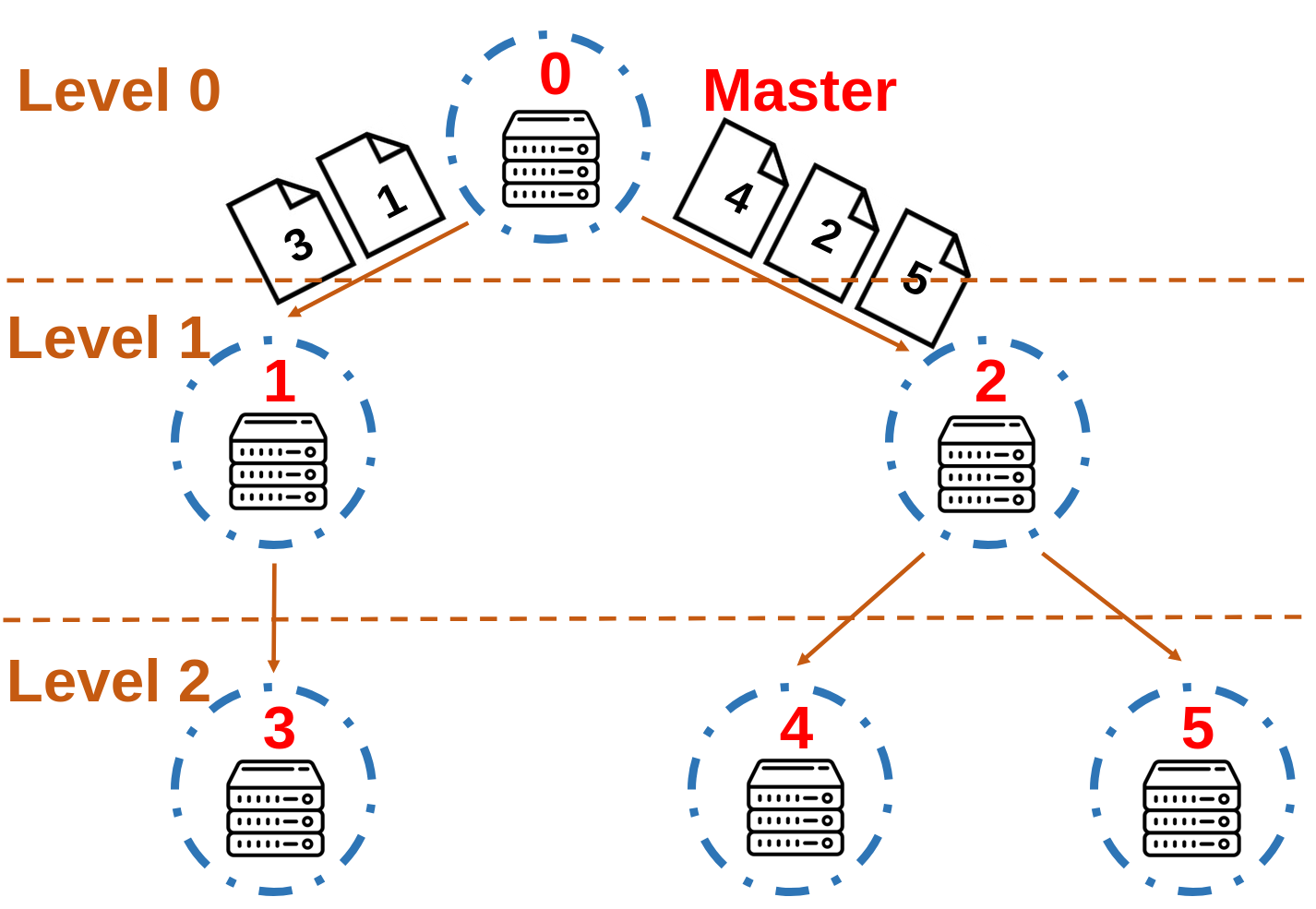}
    \caption{An example network represented by a layered tree structure. Servers' indices are highlighted in red.}
    \label{fig:illustration}
    \vspace{-0.3cm}
\end{figure}

\subsection{Mixed Integer Programming Model}
\subsubsection{Decision Variables}
To specify the computation workload allocated to each node $i\in\mathcal{N}$, we introduce decision variables $\boldsymbol{y} = \{y_0, y_1, \ldots, y_N\}$, where $y_i \in [0, Y]$ represents the size of the subtask assigned to node $i$. If $y_i = 0$, it implies that node $i$ is not assigned any workload. Note that the master may choose to execute (part of) the task locally, in which case $y_0$ would be nonzero, i.e., $y_0>0$.

To describe the offloading order for subtasks transmitted from the master to the other nodes, we introduce decision variables $\boldsymbol{o} = \{o_1, o_2, \ldots, o_N\}$, where $ o_i \in \mathcal{N}\setminus \{0\}, \forall i\in \mathcal{N}\setminus \{0\}$ and $o_i \neq o_j, \forall i, j \in \mathcal{N}\setminus \{0\}, i\neq j $. 
When $o_i > o_j$, node $i$ has a higher priority than node $j$ to receive its subtask, where $i,j\in\mathcal{N}\setminus \{0\}$ and $i \neq j$. 

\subsubsection{Objective Function}
We aim to achieve two objectives simultaneously: minimize the time spent and minimize the energy consumed by each node for executing the task. By employing a weighted sum method, we define the objective function as follows:
\begin{eqnarray}
\mathcal{J}(\boldsymbol{y}, \boldsymbol{o})  &=& \max_{i\in \mathcal{N}} w_1 T_i^{total}+ w_2E_i^{total} \label{eq:cost}\\
&=& \max_{i\in \mathcal{N}} J_i(\boldsymbol{y}, \boldsymbol{o}) \nonumber
\end{eqnarray} 
where $w_1, w_2 \geq 0$ are the weights, representing the relative importance of the two objectives. $T_i^{total}$ is the total time required for node $i$ to receive its subtask from the master and complete the assigned subtask. Note that the time required for completing the whole task is  $\max T_i^{total}$, ${i\in \mathcal{N}}$. 
$E_i^{total}$ is the total energy consumed by node $i$ during task execution. $J_i$ is introduced to denote the cost associated with node $i$. In the following sections, parentheses or subscripts may be omitted for simplicity when there is no confusion.

Next, we derive the formulas for $T_i^{total}$  and $E^{total}_i$.

\subsubsection{Time Consumption}
The task completion time for node $i$, $T_i^{total}$,  is comprised of three components: 1) time taken to transmit subtask of size $y_i$ from the master to node $i$, denoted as $T_i^{tran}$; 2) time spent waiting in the queues of relays along the path to node $i$ if any, denoted as $T_i^{wait}$; and 3) time to execute the subtask, i.e., $T_i^{comp}$. 
It is noted that the waiting time $T_i^{wait}$ is impacted by the task sizes assigned to other nodes and the offloading order, which complicates the optimization problem considered in this study.

To obtain the transmission time $T_i^{tran}$, we introduce the notation $\boldsymbol{p}_i$ to denote the sequence of nodes that lie on the path from the master to node $i$, and the notation $p_{ik}$ to denote the $k$-th node in the sequence, where $1 \leq k \leq |\boldsymbol{p}_i|$, $p_{i1} = 0$ and $p_{i|\boldsymbol{p}_i|} = i$. $|\cdot|$ finds the cardinality of a set. $T_i^{tran}$ can then be expressed by:
\begin{equation}
    T_{i}^{tran} =  \begin{cases}
    0, & \text{if } i = 0 \\
 \sum_{k = 1}^{|\boldsymbol{p}_i|-1}\frac{y_i}{R_{{p_{ik},p_{i(k+1)}}}}, & \text{else}
    \end{cases}\label{eq:transmission_time}
\end{equation}

Let's now consider the waiting time $T_i^{wait}$. 
Let $\mathcal{A}_t$ denote the full set of nodes in the $t$-th subtree of the master, where $t\in \mathcal{I}_1$, and $\cup_{t\in \mathcal{I}_1} \mathcal{A}_t = \mathcal{N}\setminus \{0\}$.  
Additionally, define $\mathcal{B}_i = \{j | o_j > o_i, i,j \in \mathcal{A}_t, i\neq j\}$ as the set of nodes whose subtasks will be transmitted before node $i$. Note that if nodes $i$ and $j$ belong to different subtrees, i.e., $i \in \mathcal{A}_t$ while $j\notin \mathcal{A}_t$, the subtask for node $j$ is transmitted using a different channel that is orthogonal to the one used for node $i$, and hence node $i$ does not need to wait for node $j$'s subtask to be transmitted even if $o_j > o_i$. Based on these definitions, we can then express the waiting time as follows:
\begin{equation}
T_i^{wait} = \begin{cases} 0, &\text{if~} i = 0 \text{ or } \mathcal{B}_i = \emptyset \\
\sum_{j \in \mathcal{B}_i} \sum_{k=1}^{|\boldsymbol{p}_s |-1} \frac{y_j}{R_{p_{sk},p_{s(k+1)}}}, &\text{else} \end{cases}
\label{eq:waiting_time}
\end{equation}
where $\boldsymbol{p}_s  = \boldsymbol{p}_i \cap \boldsymbol{p}_j$.

Based on \eqref{eq:computing_time}, \eqref{eq:transmission_time}, and \eqref{eq:waiting_time}, we then have
\begin{equation}
T_i^{total}= T_i^{trans} + T_i^{wait} + T_i^{comp}\label{eq:total_time}
\end{equation}

\subsubsection{Energy Consumption} 
With $T_i^{total}$ and \eqref{eq:comp_energy}-\eqref{eq:comm_energy}, the energy consumption $E_i^{total}$ can then be expressed by:
\begin{equation}
E_i^{total}= E_i^{comp} + \sum_{j\in \mathcal{C}_i}E_{i,j}^{comm}
\label{eq:total_energy}
\end{equation}
In the above equation, $\mathcal{C}_i$ is the set of children of node $i$, whose subtasks will be relayed by node $i$.

\subsection{Problem Formulation}
Mathematically, the multi-objective optimization problem can be formulated as follows:
\begin{equation}
\begin{aligned} 
\mathcal{P}_0: ~   &\min_{\boldsymbol{y}, \boldsymbol{o}} \mathcal{J}(\boldsymbol{y}, \boldsymbol{o})\\
    s.t.~& \sum\limits_{i=0}^N y_i = Y   & C1\nonumber\\
    & 0 \leq y_i \leq Y, \forall i \in \mathcal{N} & C2\nonumber\\
   &  o_i \in \mathcal{N}\setminus \{0\}, \forall i\in \mathcal{N}\setminus \{0\} & C3\nonumber\\
   & o_i \neq o_j, \forall i, j \in \mathcal{N}\setminus \{0\}, i\neq j  & C4\nonumber
\end{aligned}
\end{equation}

Constraint $C1$ ensures that the total assigned workload sums up to the total task size $Y$. Constraints $C2$-$C4$ guarantee that each decision variable takes on valid values.

\section{Joint Optimization of Task Allocation and Scheduling} \label{sec:Optimal Method}
In this section, we introduce two exact methods to find the optimal solution to the joint task allocation and scheduling problem $\mathcal{P}_0$.

\subsection{Centralized MILP-based Optimization (CMO)} \label{sec:ex search}

\subsubsection{Algorithm Description}
It is noted that $\mathcal{P}_0$, which aims to minimize the maximum cost of individual nodes, is a minmax optimization problem. Hence, we can convert it into an equivalent mixed integer linear programming (MILP) problem by introducing an auxiliary variable $z$ as follows: 
\begin{equation}
\begin{aligned} 
 \mathcal{P}_1: ~   &\min_{\boldsymbol{y}, \boldsymbol{o}, z} \quad z \\
    s.t.~& z \geq J_i (\boldsymbol{y}, \boldsymbol{o}), \forall i \in \mathcal{N} \\ 
    &C1 - C4 
    \label{eq:minproblem}
\end{aligned}
\end{equation}
Then, the minimum cost $\mathcal{J}^{\star} = z^{\star}$, where $z^{\star}$ is the minimum value of $z$ found by solving $\mathcal{P}_1$.

Problem $\mathcal{P}_1$ can be further decomposed into two subproblems. The first subproblem aims to optimize the task allocation $\boldsymbol{y}$, given a particular offloading order denoted as $\boldsymbol{o} =\boldsymbol{o}_k $: 
\begin{equation}
\begin{aligned} 
\mathcal{P}_1^{(a)}: ~   &\min_{\boldsymbol{y},z} \quad z \\
    s.t.~& z \geq  J_i(\boldsymbol{y}, \boldsymbol{o}_k), \forall i \in \mathcal{N} 
    \nonumber \\
    & C1 - C2 \nonumber
\end{aligned}
\end{equation}
Denote the optimal solution to problem $\mathcal{P}_1^{(a)}$ at $\boldsymbol{o} =\boldsymbol{o}_k $ as $\{{\boldsymbol{y}^*}(\boldsymbol{o}_k), {z}^*(\boldsymbol{o}_k)\}$. 
The second subproblem aims to optimize the offloading order $ \boldsymbol{o}$: 
\begin{equation}
\begin{aligned} 
\mathcal{P}_1^{(b)}: ~   &\min_{\boldsymbol{o}_k} \quad {z}^*(\boldsymbol{o}_k)  \nonumber 
\end{aligned}
\end{equation}

Now let's consider subproblem $\mathcal{P}_1^{(a)}$, which can be solved using Lagrange multipliers \cite{boyd2004convex}. Particularly, the Lagrangian function can be defined as follows:
\begin{align}
\begin{split}
\mathcal{L}(\boldsymbol{y}, z, &\boldsymbol{\lambda}, \mu) = z +\sum_{i=0}^{N} \lambda_i\left[J_i(\boldsymbol{y}, \boldsymbol{o}_k) - z\right] + \mu\left(\sum_{i=0}^N y_i - Y\right), \nonumber
\end{split}
\end{align}
where $\boldsymbol{\lambda} = [\lambda_0, \lambda_1, \ldots, \lambda_N]$ and $\mu$ are Lagrangian multipliers. $\lambda_i \ge 0$, $\forall i \in \mathcal{N}$.  Define \[g(\boldsymbol{\lambda}, \mu) = \min\limits_{\boldsymbol{y},z} \mathcal{L}(\boldsymbol{y}, z, \boldsymbol{\lambda},\mu).\] 
The dual optimization problem is then constructed as follows:
\begin{eqnarray}
  &\max\limits_{\boldsymbol{\lambda}, \mu}&g(\boldsymbol{\lambda}, \mu) \label{eq:dual}\\
  &s.t.& \boldsymbol{\lambda} \ge 0 \nonumber
\end{eqnarray}

As the objective function and the inequality constraints in our problem are convex, and the equality constraints are affine and strictly feasible, Slater's condition \cite{auslender2000lagrangian} is satisfied and the strong duality holds. That means the optimal value of the primal problem $\mathcal{P}_1^{(a)}$ is equal to the optimal value of its dual problem \eqref{eq:dual}.
The optimal solution to $\mathcal{P}_1^{(a)}$ can then be found by solving the following equation set, known as the Karush-Kuhn-Tucker (KKT) conditions \cite{luo2006introduction}:

\begin{equation}
\left\{
\begin{array}{l}
    \frac{\partial}{\partial y_i}\mathcal{L}(\boldsymbol{y}, z,\boldsymbol{\lambda}, \mu) = 0, ~ \forall i \in \mathcal{N} \\
    \frac{\partial}{\partial z}\mathcal{L}(\boldsymbol{y}, z,\boldsymbol{\lambda}, \mu)  = 0 \\
    \sum_{i=0}^N y_i = Y \\
    \lambda_i (J_i(\boldsymbol{y}) - z) = 0, ~\forall i \in \mathcal{N} \\
    J_i(\boldsymbol{y}) - z \leq 0, ~\forall i \in \mathcal{N} \\
    \lambda_i \geq 0, ~\forall i \in \mathcal{N}
\end{array} \label{eq:general solution}
\right.
\end{equation}

To solve problem $\mathcal{P}_1^{(b)}$, we can use exhaustive search. This involves computing the cost $z^*(\boldsymbol{o}_k)$ for each possible offloading order $\boldsymbol{o}_k$ and selecting the one that yields the smallest cost. 
However, as $\boldsymbol{o}_k$ can take $N!$ possible values, evaluating each possible value is time-consuming. 
A significant reduction in the number of possible values to evaluate can be achieved by exploiting the parallelism in sending subtasks belonging to different subtrees of the master. 
Specifically, the offloading orders for nodes in any subtree $\mathcal{A}_t$ are independent of those in any other subtree $\mathcal{A}_{t'}$, where $t,t'\in\mathcal{I}_1$ and $t\neq t'$. Therefore, the number of possible values of $\boldsymbol{o}_k$ that need to be evaluated can be reduced to $\prod_{t\in \mathcal{I}_1} |\mathcal{A}_t|!$. Algorithm \ref{Algo:Exhaustive Search Algorithm} summarizes the procedure of the proposed approach, named the centralized MILP-based optimization (CMO).

\begin{algorithm} 
    \caption{\textsc{CMO}($\mathcal{T}$, $Y$)}
    \label{Algo:Exhaustive Search Algorithm}
    \begin{algorithmic}[1]
        \FOR{each $\boldsymbol{o}_k, k \in \{1,2,\ldots, \prod_{i\in \mathcal{I}_1} |\mathcal{A}_i|!\}$}
            \STATE Find $\{\boldsymbol{y}^*(\boldsymbol{o}_k),z^*(\boldsymbol{o}_k)\}$ by solving equation set (\ref{eq:general solution});
        \ENDFOR
        \STATE $\boldsymbol{o}^{\star} \gets \argmin_{\boldsymbol{o}_k} z^*(\boldsymbol{o}_k)$, $z^{\star} \gets z^*(\boldsymbol{o}^{\star})$, $\boldsymbol{y}^{\star} \gets \boldsymbol{y}^*(\boldsymbol{o}^{\star})$
        \RETURN $z^{\star},\boldsymbol{y}^{\star},\boldsymbol{o}^{\star}$
    \end{algorithmic}
\end{algorithm}

\subsubsection{Computational Complexity Analysis}
As the equation set \eqref{eq:general solution} involves $2N+4$ unknown variables, solving it requires $O(N^3)$ amount of time in the worst case \cite{greub2012linear}. The computational complexity of CMO is hence $O(\prod_{t\in \mathcal{I}_1} |\mathcal{A}_t|! N^3)$, with a worst-case complexity of $O(N!N^3)$.

\subsection{Parallel MILP-based Optimization (PMO)}
\subsubsection{Algorithm Description}
The parallelism involved in sending subtasks belonging to different subtrees of the master can be further harnessed to greatly enhance efficiency. Specifically, the key idea is to decompose problem $\mathcal{P}_1$ alternatively into two different subproblems. The first subproblem optimizes the task allocation and scheduling for nodes within each subtree, which can be solved in parallel. The second subproblem optimizes the total workload assigned to each subtree. 

Mathematically, let $Y_t$ be the total workload assigned to nodes within the $t$-th subtree of the master, i.e., $Y_t= \sum_{i\in \mathcal{A}_t}y_i$, $t\in\mathcal{I}_1$. Additionally, let $\boldsymbol{y}_t = \{y_i| i\in \mathcal{A}_t\}$ and $\boldsymbol{o}_t = \{o_i | i\in \mathcal{A}_t\}$ represent the decision variables associated with nodes within the subtree $\mathcal{A}_t$. Then, the first subproblem can be formulated as follows: 
\begin{equation}
\begin{aligned} 
 \mathcal{P}^{(a')}_1: ~   &\min_{\boldsymbol{y}_t, \boldsymbol{o}_t, z_t} \quad z_t \\
    s.t.~& z_t \geq J_i(\boldsymbol{y}_t,\boldsymbol{o}_t), \forall i \in \mathcal{A}_t  \\ 
    & \sum_{i\in \mathcal{A}_t} y_i = Y_t \nonumber \\
    & 0 \leq y_i \leq Y_t, \forall i\in\mathcal{A}_t \nonumber\\
    &  o_i \in \mathcal{N}\setminus \{0\}, \forall i\in \mathcal{A}_t \nonumber\\
   & o_i \neq o_j, \forall i, j \in \mathcal{A}_t, i\neq j  \nonumber
\end{aligned}
\end{equation}
Since tasks assigned to different subtrees can be transmitted simultaneously, this problem can be solved independently and in parallel for different subtrees.

Suppose given $Y_t$, the optimal solution to problem $\mathcal{P}_1^{(a')}$ for subtree $\mathcal{A}_t$ is $\{\bar{z}_t(Y_t), \bar{\boldsymbol{y}}_t(Y_t), \bar{\boldsymbol{o}}_t(Y_t)\}$. The second subproblem aims to optimize the workload assigned to each subtree as well as to the master, denoted as $\boldsymbol{\mathcal{Y}}= \{Y_t| t\in \mathcal{I}_1\}\cup\{y_0\}$, which can be mathematically formulated as follows:
\begin{equation}
\begin{aligned}
 \mathcal{P}^{(b')}_1: ~   &\min_{\boldsymbol{\mathcal{Y}}} \quad z \\
    s.t.~& z \geq \bar{z}_t(Y_t), \forall t \in \mathcal{I}_1\\
    & z \geq J_0(\boldsymbol{\mathcal{Y}}) \\
    & y_0 + \sum_{t \in \mathcal{I}_1} Y_t = Y \nonumber
\end{aligned}
\end{equation}
where $J_0(\boldsymbol{\mathcal{Y}}) = w_1T_0^{total} + w_2 E_0^{total}= w_1 \frac{y_0b}{f_0} + w_2 \left[\gamma_0 y_0 b(f_0)^2 + \sum_{t \in \mathcal{I}_1} \frac{e_0 Y_t}{R_{0, t}}\right]$. Of note, this subproblem can be conceptualized by abstracting each subtree as a single node. Therefore, $\mathcal{T}$ is abstracted as a one-layer tree (see Fig. \ref{fig:Greedy Flat}). The optimization of the workload $Y_t$ assigned to each abstracted node (subtree) thus does not require consideration of the offloading order.
Then, the optimal solution to $\mathcal{P}^{(b')}_1$, denoted as $Y^*_t$, $\forall t\in \mathcal{I}_1$, can be used to derive the optimal solution to the original problem $\mathcal{P}_1$. Particularly, $\boldsymbol{y}^{\star} = \{\bar{\boldsymbol{y}}_t(Y^*_t)| t\in \mathcal{I}_1\}$, and the optimal offloading order for nodes within each subtree $t$ is given by  $\bar{\boldsymbol{o}}_t(Y^*_t)$.

\begin{figure}[t]
\vspace{-0.3cm}
    \centering
    \includegraphics[scale=0.45]{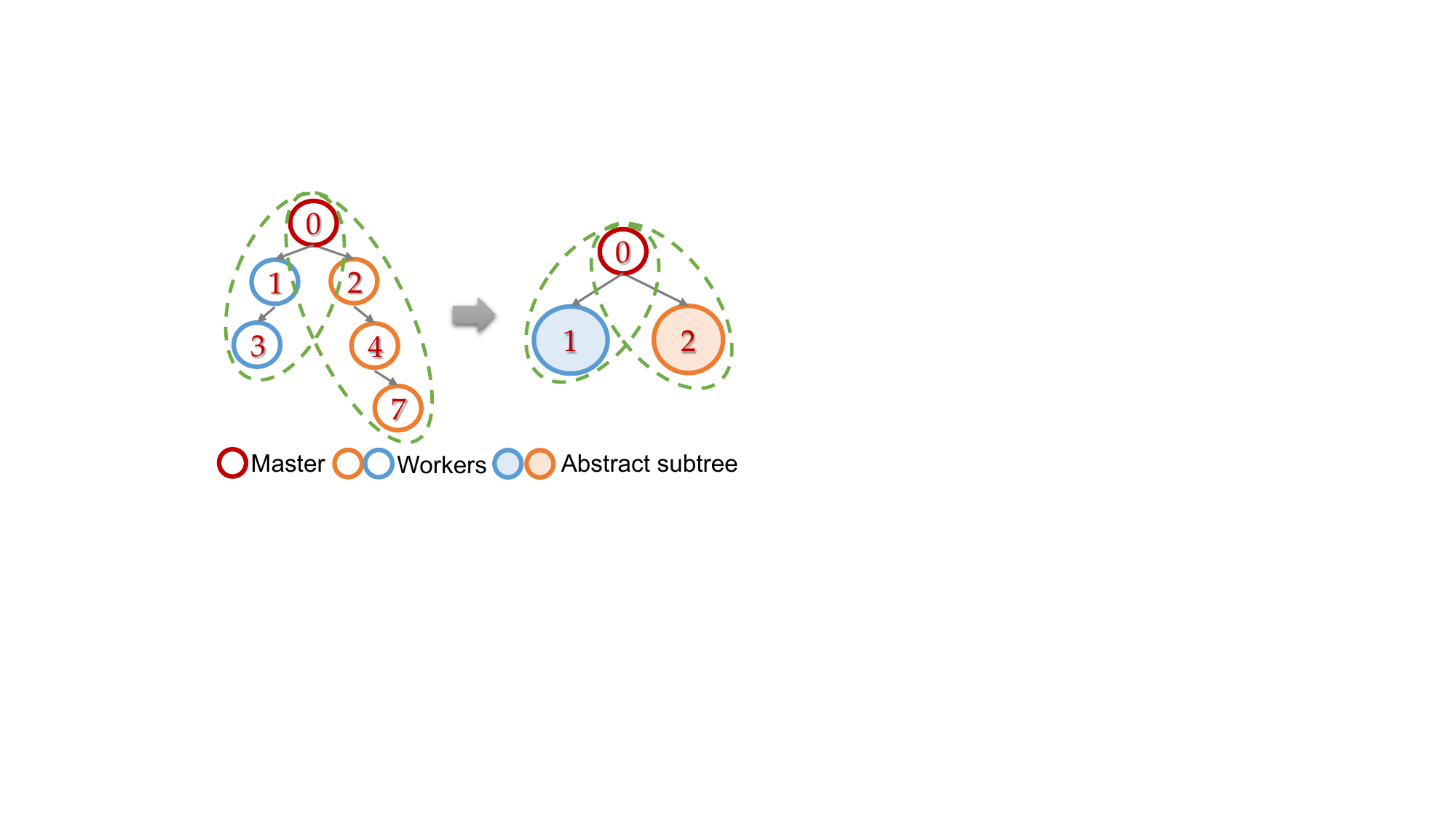}
    \caption{Illustration of how a network tree can be abstracted as a one-layer tree. }
    \label{fig:Greedy Flat}
    \vspace{-0.3cm}
\end{figure}

Solving problem $\mathcal{P}_1^{(a')}$ given $Y_t$ is relatively straightforward. However, directly addressing subproblem $\mathcal{P}_1^{(b')}$ is challenging because $Y_t$ is continuous, and obtaining $\bar{z}_t(Y_t)$ in the constraints requires solving subproblem $\mathcal{P}_1^{(a')}$. Before we proceed with our approach to solving these subproblems, we present the following lemma and theorem, which allow for their simplification.

\begin{lemma} \label{lemma:lm1}
Given $\mathcal{T}$ and $Y$, for an arbitrary offloading order $\boldsymbol{o}_k $, suppose $\{{\boldsymbol{y}^*}(\boldsymbol{o}_k), {z}^*(\boldsymbol{o}_k)\}$ is an optimal solution to problem $\mathcal{P}_1^{(a)}$. Then, $\{\frac{Y'}{Y}{\boldsymbol{y}^*}(\boldsymbol{o}_k), \frac{Y'}{Y}{z}^*(\boldsymbol{o}_k)\}$ is an optimal solution to $\mathcal{P}^{(a)}_1$ when the task size is changed to $Y'$.

\end{lemma}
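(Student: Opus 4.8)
The plan is to exploit the fact that, with the offloading order $\boldsymbol{o}_k$ held fixed, the per-node cost $J_i(\boldsymbol{y}, \boldsymbol{o}_k) = w_1 T_i^{total} + w_2 E_i^{total}$ is positively homogeneous of degree one in the allocation vector $\boldsymbol{y}$. First I would inspect every term composing $J_i$ and verify that each carries exactly one factor of some workload $y_j$ and no additive constant: the transmission time \eqref{eq:transmission_time} is linear in $y_i$, the waiting time \eqref{eq:waiting_time} is a sum of terms linear in the $y_j$ for $j\in\mathcal{B}_i$, the computation time \eqref{eq:computing_time} is linear in $y_i$, and the two energy contributions \eqref{eq:comp_energy}--\eqref{eq:comm_energy} are linear in the relevant relayed and locally processed workloads. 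Consequently $J_i(\alpha\boldsymbol{y}, \boldsymbol{o}_k) = \alpha\,J_i(\boldsymbol{y}, \boldsymbol{o}_k)$ for every scalar $\alpha > 0$, which is the single structural fact that drives the whole argument.

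Writing $\alpha = Y'/Y$, I would then establish feasibility of the candidate $\{\alpha\boldsymbol{y}^*(\boldsymbol{o}_k), \alpha z^*(\boldsymbol{o}_k)\}$ for the task size $Y'$. Constraint $C1$ holds since $\sum_{i} \alpha y_i^* = \alpha Y = Y'$; constraint $C2$ holds because $\alpha>0$ maps the box $[0,Y]$ onto $[0,Y']$; and each epigraph constraint holds because multiplying $z^* \geq J_i(\boldsymbol{y}^*, \boldsymbol{o}_k)$ through by $\alpha$ and invoking homogeneity yields $\alpha z^* \geq \alpha J_i(\boldsymbol{y}^*, \boldsymbol{o}_k) = J_i(\alpha\boldsymbol{y}^*, \boldsymbol{o}_k)$.

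For optimality I would run the scaling in reverse. Any point feasible for $\mathcal{P}_1^{(a)}$ at task size $Y'$ maps, under multiplication by $1/\alpha = Y/Y'$, to a feasible point at task size $Y$ whose objective value is scaled by $1/\alpha$; the same three checks with the reciprocal factor confirm this. Hence $\boldsymbol{y}\mapsto\alpha\boldsymbol{y}$ is a bijection between the two feasible sets that multiplies the objective by exactly $\alpha$, so minimizers correspond to minimizers. Equivalently, were some value $\tilde{z} < \alpha z^*$ achievable at $Y'$, then $\tilde{z}/\alpha < z^*$ would be achievable at $Y$, contradicting the assumed optimality of $z^*(\boldsymbol{o}_k)$. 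This proves that $\alpha z^*$ is the optimal value and $\alpha\boldsymbol{y}^*$ an optimal allocation at $Y'$.

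I do not anticipate a genuine obstacle; the only step requiring care is the term-by-term verification of degree-one homogeneity, in particular confirming that the waiting-time sum and the relay-energy sum introduce no cross terms or constants that would spoil exact proportionality. Once homogeneity is secured, the feasibility and reverse-scaling arguments are immediate, so the lemma reduces almost entirely to that structural check.
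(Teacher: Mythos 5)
Your proof is correct, and it takes a genuinely different route from the paper's. Both arguments hinge on the same structural fact — that each per-node cost $J_i(\boldsymbol{y}, \boldsymbol{o}_k)$ is linear (degree-one homogeneous, with no constant term) in the allocation vector $\boldsymbol{y}$ — but they exploit it differently. The paper writes $J_i = a_{i,0}y_0 + \cdots + a_{i,|\mathcal{N}|}y_{|\mathcal{N}|}$, substitutes this into the KKT system \eqref{eq:general solution}, and verifies that the scaled primal--dual point $\{\frac{Y'}{Y}z^*, \frac{Y'}{Y}\boldsymbol{y}^*, \boldsymbol{\lambda}^*, \frac{Y'}{Y}\mu^*\}$ satisfies the scaled system, so optimality follows from KKT sufficiency (which rests on the convexity and Slater's condition established earlier in Sec.~\ref{sec:Optimal Method}). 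You instead argue directly on the primal problem: the map $\boldsymbol{y}\mapsto \alpha\boldsymbol{y}$ with $\alpha = Y'/Y$ is a bijection between the feasible sets at task sizes $Y$ and $Y'$ that scales the objective by exactly $\alpha$, so minimizers correspond to minimizers, with the contradiction argument sealing optimality. Your route is more elementary and self-contained — it needs no duality, differentiability, or KKT machinery, and would survive even if the costs were merely positively homogeneous rather than linear — while the paper's route has the side benefit of exhibiting the scaled dual multipliers explicitly (only $\mu$ rescales; $\boldsymbol{\lambda}$ is invariant), which is consistent with the offline--online scheme built on this lemma. One small point of care that you correctly flagged: the relay-energy term $\sum_{j\in\mathcal{C}_i} E_{i,j}^{comm}$ involves the workloads of entire subtrees routed through node $i$, but these are still linear in the $y_j$'s, so homogeneity indeed holds term by term.
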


\begin{proof}
As detailed in Sec. \ref{sec:Optimal Method}, the optimal solution to $\mathcal{P}_1^{(a)}$ can be found by solving the equation set \eqref{eq:general solution}. Suppose $\{z^*(\boldsymbol{o}_k), \boldsymbol{y}^*(\boldsymbol{o}_k), \boldsymbol{\lambda}^*(\boldsymbol{o}_k), \mu^*(\boldsymbol{o}_k)\}$ is the obtained optimal solution for task size $Y$. Note that the cost of each node $J_i$ can be expressed as a linear combination of the task assignments $\{y_0, y_1, \ldots, y_N\}$, i.e., $J_i = a_{i,0} y_0+ a_{i,1} y_1+...+a_{i,{|\mathcal{N}|}} y_{|\mathcal{N}|}$, where $\{a_{i,0},...,a_{i, |\mathcal{N}|}\}$  are constants that depend on network characteristics.
Hence, equation set (\ref{eq:general solution}) can be simplified as:
\begin{equation}
\left\{
\begin{array}{l}
    \sum_{i=0}^{|\mathcal{N}|}\lambda_i\frac{\partial (a_{i,0} y_0+ a_{i,1} y_1 +...+a_{i,{|\mathcal{N}|}} y_{|\mathcal{N}|})}{\partial y_i} +\mu = 0, ~ \forall i \in \mathcal{N} \\
    1 - \sum_{i=0}^{|\mathcal{N}|}\lambda_i  = 0 \\
    \sum_{i=0}^N y_i = Y \\
    \lambda_i \left(a_{i,0} y_0+ a_{i,1} y_1+...+a_{i,{|\mathcal{N}|}} y_{|\mathcal{N}|} - z\right) = 0, ~\forall i \in \mathcal{N} \\
    a_{i,0} y_0 + a_{i,1} y_1+...+a_{i,{|\mathcal{N}|}} y_{|\mathcal{N}|} - z \leq 0, ~\forall i \in \mathcal{N} \\
    \lambda_i \geq 0, ~\forall i \in \mathcal{N}
\end{array} 
\right. \nonumber
\end{equation}

When the task size is changed to $Y'$, the solution $\{\frac{Y'}{Y}z^*(\boldsymbol{o}_k), \frac{Y'}{Y}\boldsymbol{y}^*(\boldsymbol{o}_k), \boldsymbol{\lambda}^*(\boldsymbol{o}_k), \frac{Y'}{Y}\mu^*(\boldsymbol{o}_k)\}$ satisfies the above equation set. This indicates that it is an optimal solution to  $\mathcal{P}_1^{(a)}$ for task size $Y'$. With this, the proof is now complete.
\end{proof}

Lemma \ref{lemma:lm1} leads directly to the following theorem. 
\begin{theorem} \label{Thm:factor}
Given $\mathcal{T}$ and $Y$, suppose $\{z^{\star}, \boldsymbol{y}^{\star}, \boldsymbol{o}^{\star}\}$ is the optimal solution to problem $\mathcal{P}_1$. Then, $\{\frac{Y'}{Y}z^{\star}, \frac{Y'}{Y}\boldsymbol{y}^{\star}, \boldsymbol{o}^{\star}\}$ is the optimal solution to $\mathcal{P}_1$ when the task size is changed to $Y'$.
\end{theorem}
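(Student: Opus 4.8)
The plan is to exploit the two-level decomposition of $\mathcal{P}_1$ into the inner allocation subproblem $\mathcal{P}_1^{(a)}$ and the outer ordering subproblem $\mathcal{P}_1^{(b)}$, and to show that rescaling the task size acts compatibly on both levels while leaving the optimal order untouched. First I would recall that, by construction, the optimal solution to $\mathcal{P}_1$ is obtained by solving $\mathcal{P}_1^{(a)}$ for every feasible offloading order $\boldsymbol{o}_k$ to obtain $z^*(\boldsymbol{o}_k)$, and then selecting $\boldsymbol{o}^{\star} = \argmin_{\boldsymbol{o}_k} z^*(\boldsymbol{o}_k)$, with corresponding allocation $\boldsymbol{y}^{\star} = \boldsymbol{y}^*(\boldsymbol{o}^{\star})$ and cost $z^{\star} = z^*(\boldsymbol{o}^{\star})$.

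Next I would invoke Lemma \ref{lemma:lm1} at the level of a fixed order. For every feasible $\boldsymbol{o}_k$, the lemma guarantees that when the task size changes from $Y$ to $Y'$, the optimal inner allocation becomes $\frac{Y'}{Y}\boldsymbol{y}^*(\boldsymbol{o}_k)$ and the optimal inner cost becomes $\frac{Y'}{Y}z^*(\boldsymbol{o}_k)$. The crucial structural observation is that the feasible set of offloading orders is specified entirely by constraints $C3$ and $C4$, neither of which involves the task size; hence the collection of candidate orders over which $\mathcal{P}_1^{(b)}$ optimizes is identical for $Y$ and $Y'$.

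The final step is to argue that the outer minimization is unchanged. Since $Y'/Y > 0$ is a single positive constant multiplying every inner cost $z^*(\boldsymbol{o}_k)$, the relative ranking of the orders by their optimal cost is preserved, so $\argmin_{\boldsymbol{o}_k} \frac{Y'}{Y} z^*(\boldsymbol{o}_k) = \argmin_{\boldsymbol{o}_k} z^*(\boldsymbol{o}_k) = \boldsymbol{o}^{\star}$. Therefore the optimal order for task size $Y'$ remains $\boldsymbol{o}^{\star}$, its optimal allocation is $\frac{Y'}{Y}\boldsymbol{y}^*(\boldsymbol{o}^{\star}) = \frac{Y'}{Y}\boldsymbol{y}^{\star}$, and its optimal cost is $\frac{Y'}{Y}z^*(\boldsymbol{o}^{\star}) = \frac{Y'}{Y}z^{\star}$, which is precisely the claimed solution.

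I expect the only subtle point, and hence the part deserving the most care, to be the argument that the \emph{argmin} over orders is invariant under the uniform positive rescaling, together with the easy but essential remark that the order-feasibility constraints are task-size independent. Everything else reduces to directly quoting Lemma \ref{lemma:lm1}, so no new computation is required.
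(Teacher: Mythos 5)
Your proposal is correct and takes essentially the same route as the paper: the theorem is deduced from Lemma \ref{lemma:lm1}. You are in fact more careful than the paper, whose entire proof is the single sentence that the result follows from Lemma \ref{lemma:lm1} at $\boldsymbol{o}_k = \boldsymbol{o}^{\star}$ --- which by itself only shows the scaled pair is optimal for the fixed-order subproblem $\mathcal{P}_1^{(a)}$ --- whereas your additional steps (applying the lemma to every feasible order and observing that multiplying all inner costs $z^*(\boldsymbol{o}_k)$ by the positive constant $Y'/Y$ preserves the $\argmin$ over the task-size-independent set of orders) supply exactly what is needed to conclude that $\boldsymbol{o}^{\star}$ remains the optimal order for task size $Y'$.
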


\begin{proof}
Theorem \ref{Thm:factor} can be directly derived from Lemma \ref{lemma:lm1} when $\boldsymbol{o}_k = \boldsymbol{o}^{\star}$.
\end{proof}

The proportionality property described in Theorem \ref{Thm:factor} infers that, given the optimal solution to $\mathcal{P}_1^{(a')}$ for any $Y'_t$, i.e., $\{\bar{z}_t(Y'_t), \bar{\boldsymbol{y}}_t(Y'_t), \bar{\boldsymbol{o}}_t(Y'_t)\}$, the subproblem $\mathcal{P}_1^{(b')}$ can be simplified to:
\begin{equation}
\begin{aligned}
 \mathcal{P}^{(c')}_1: ~   &\min_{\boldsymbol{\mathcal{Y}}} \quad z \\
    s.t.~& z \geq \frac{Y_t}{Y'_t}\bar{z}_t(Y'_t), \forall t \in \mathcal{I}_1\\
     & z \geq J_0(\boldsymbol{\mathcal{Y}}) \\
    & y_0 + \sum_{t \in \mathcal{I}_1} Y_t = Y \nonumber
\end{aligned}
\end{equation}
Since $Y'_t$ and $\bar{z}_t(Y'_t)$ are known (by solving $\mathcal{P}_1^{(a')}$), $\mathcal{P}^{(c')}_1$ is now straightforward to solve. 

Next, we describe our approach to solve subproblems $\mathcal{P}^{(a')}_1$  and $\mathcal{P}^{(c')}_1$. Particularly, for $\mathcal{P}^{(a')}_1$, we can solve it by leveraging the CMO algorithm (Algorithm \ref{Algo:Exhaustive Search Algorithm}). For each subtree $\mathcal{A}_t$, $t\in\mathcal{I}_1$, we run the CMO algorithm on the tree formed by $\mathcal{A}_t$ as well as the master (as highlighted by the green dashed circle in Fig. \ref{fig:Greedy Flat}), denoted as $\mathcal{T}_t$, where the input $Y$ can take any value. Suppose the output generated by CMO for $\mathcal{T}_t$ is denoted as $\{\tilde{z}_t, \tilde{\boldsymbol{y}}_t, \tilde{\boldsymbol{o}}_t\}$, where $\tilde{\boldsymbol{y}}_t=\{\tilde{y}_i|i\in\mathcal{A}_t\} \cup \{\tilde{y}_0\}$  specifies the tasks allocated to the nodes within $\mathcal{T}_t$. Then we let $Y'_t = \sum_{i \in \mathcal{A}_t} \tilde{y}_i$, $\bar{z}_t(Y'_t) = \max \{ J_i(\tilde{\boldsymbol{y}}_t, \tilde{\boldsymbol{o}}_t) | i \in \mathcal{A}_t \} $. Given $Y'_t$ and $\bar{z}_t(Y'_t)$ for each $t\in \mathcal{I}_1$, we can then solve the subproblem $\mathcal{P}^{(c')}_1$ using a commercial solver, such as Gurobi \cite{pedroso2011optimization} and CVX \cite{guimaraes2015tutorial}. 

Denote the optimal solution to subproblem $\mathcal{P}^{(c')}_1$ as $\boldsymbol{\mathcal{Y}}^* = \{{Y}^*_t|t\in\mathcal{I}_1\} \cup \{{y}^*_0\}$. 
The optimal solution to the original problem $\mathcal{P}_1$ can then be derived as:
\begin{eqnarray}
y^{\star}_i &=& \frac{Y^*_t}{Y'_t} \tilde{y}_i, \forall  i \in \mathcal{A}_t, t\in \mathcal{I}_1 \label{eq: original_task}\\
z^{\star} &=& \max \left\{ \max_{t\in \mathcal{I}_1}\frac{Y^*_t}{Y'_t} \bar{z}_t(Y'_t),J_0(\boldsymbol{\mathcal{Y}}^*)\right\} \label{eq: original_cost}
\end{eqnarray}
$\tilde{\boldsymbol{o}}_t$, $t \in \mathcal{I}_1$, specifies the optimal offloading order for subtasks assigned to each node within each subtree $\mathcal{A}_t$, where subtasks for different subtrees can be transmitted simultaneously. 

Algorithm \ref{Algo:Greedy Flat Algorithm} summarizes the procedure of the parallel MILP-based optimization (PMO) method.

\begin{algorithm} [h]
    \caption{\textsc{PMO}($\mathcal{T}$, $Y$)}
    \label{Algo:Greedy Flat Algorithm}
    \begin{algorithmic}[1] 
        \FOR {each $t \in \mathcal{I}_1$}  
        \STATE $\{\tilde{z}_t, \tilde{\boldsymbol{y}}_t, \tilde{\boldsymbol{o}}_t\} \gets$ {\textsc{CMO}($\mathcal{T}_t$, $Y$)} in parallel;\label{inalgo:greedyflat}
        \ENDFOR
        \STATE Find $\boldsymbol{\mathcal{Y}}^*$ by solving $\mathcal{P}^{(c')}_1$; \label{inalgo:algo2line6}
        \STATE Calculate $\boldsymbol{y}^\star, z^\star$ using \eqref{eq: original_task} and \eqref{eq: original_cost}, respectively; 
        \RETURN  $z^{\star},\boldsymbol{y}^{\star}$, and $\{\tilde{\boldsymbol{o}}_t| t\in \mathcal{I}_1\}$;
    \end{algorithmic}
\end{algorithm}


\subsubsection{Computational Complexity Analysis}
Since subtree $\mathcal{T}_t$, $t\in\mathcal{I}_1$, contains $|\mathcal{A}_t|+1$ nodes,  
CMO$(\mathcal{T}_t, Y)$ requires $O((|\mathcal{A}_t|+1)!(|\mathcal{A}_t|+1)^3)$ time to execute.  
The complexity of PMO is $O(\max_{t\in \mathcal{I}_1} (|\mathcal{A}_t|+1)!(|\mathcal{A}_t|+1)^3)$ with a worst-case complexity of $\mathcal{O}(N!N^3)$.

\subsection{Offline-Online Computation}
Despite the fact that the computational complexity of CMO and PMO grows rapidly as the network expands, both can be executed in real-time  
by transferring the majority of the computations offline. This can be achieved by leveraging the proportionality property presented in Theorem \ref{Thm:factor}. Particularly, for any task size $Y$, we can execute Algorithm \ref{Algo:Exhaustive Search Algorithm} \textit{offline} to derive a baseline optimal solution $\{z^{\star}, \boldsymbol{y}^{\star}, \boldsymbol{o}^{\star}\}$. Then, during \textit{online} computations, upon receiving a new task $Y'$, we can readily compute the associated optimal solution in real-time by scaling the baseline with a factor $\frac{Y'}{Y}$, i.e.,  $\{\frac{Y'}{Y}z^{\star}, \frac{Y'}{Y}\boldsymbol{y}^{\star}, \boldsymbol{o}^{\star}\}$. 

This offline-online computation scheme also equips CMO and PMO with the ability to handle dynamic networks with time-varying network characteristics. One approach to deploying them in dynamic networks is to periodically execute Algorithm \ref{Algo:Exhaustive Search Algorithm} to update the baseline solution with the latest network information. Alternatively, the baseline solution can be updated when significant network changes occur, such as alterations in the network topology.

\section{Heuristic Methods} \label{sec:Heuristic Methods}
In this section, we introduce three heuristic methods to further speed up the computation. 

\subsection{Worker Selection}
Through simulation studies, as presented in Sec. \ref{sec:simulation}, we find that the solutions produced by CMO and PMO typically improve as more workers participate in computations. However, 
the rate of performance improvement diminishes as the network size reaches a certain threshold. 
This observation inspires us to consider selecting a subset of workers that contribute the most to performance improvement. Next, we introduce two worker selection methods: 1) a \textit{node pruning} (NP) strategy, and 2) a \textit{level pruning} (LP) strategy. These methods can be applied either individually or in combination. When PMO is utilized, they can be employed to prune each subtree $\mathcal{T}_t$ before executing Line 2 in Algorithm \ref{Algo:Greedy Flat Algorithm}.

\subsubsection{Node Pruning (NP)}
The key idea of NP is to ``prune" nodes that are too costly to use. Specifically, this strategy evaluates each node one by one. For a given node $i$, it estimates the cost of using this node by performing partial offloading \cite{8516294}, which finds the optimal task partition between the master and node $i$ exclusively. 
The obtained cost, denoted as $z^p_{i}$, is then compared with the cost of local computing, i.e., the cost of processing the entire task $Y$ at the root node, denoted as $z^{(0)}$, which can be obtained by running $\text{CMO}(\mathcal{I}_0, Y)$. If the cost reduction, measured by $\frac{z^{(0)} - z^p_i}{z^{(0)}}$, exceeds a predefined threshold $\theta_p$, node $i$ is selected; otherwise, it is ``pruned". Here, ``prune" means that no workload is assigned to the node. If the node is a leaf, it is removed from the tree. However, if it is an intermediate node with unpruned children, it remains and only acts as a relay.

\subsubsection{Level Pruning (LP)}
The key idea of LP is to trim nodes that are excessively distant from the master node, whose computing resources are too costly to use considering the significant communication costs. Specifically, this strategy evaluates the top $\xi$ levels of the original network tree, removing levels from $\xi+1$ to $H$. The resulting tree, denoted as $\mathcal{T}^{\xi}$, satisfies $\mathcal{T}^{\xi} = \mathcal{T} \setminus \cup_{l=\xi}^H \mathcal{I}_l$.

\subsection{Genetic Algorithm}
The worker selection methods allow us to reduce workers but may prune nodes that could significantly improve system performance. Here, we introduce a genetic algorithm (GA) \cite{mirjalili2019genetic} that allows us to evaluate large networks. It involves two phases: \textit{initialization} and \textit{training}. In the initialization phase, a population set $\boldsymbol{O} = \{\boldsymbol{o}_k\}$ is first randomly generated, which consists of $P$ offloading orders (chromosomes). The corresponding optimal task partition $\boldsymbol{y}^*(\boldsymbol{o}_k)$ and cost $z^*(\boldsymbol{o}_k)$ are then computed by solving \eqref{eq:general solution}, where $z^*(\boldsymbol{o}_k)$ is the fitness of the chromosome $\boldsymbol{o}_k$. 
Following the initialization, the training phase starts with Elitism, which picks the top $\alpha\%$ of the fittest members from the current population and propagates them to the next generation. After that, an iterative procedure is performed to create offspring. In each iteration, two offloading orders are randomly picked from the current population $\boldsymbol{O}$ according to the probability distribution $\left\{\left.\frac{1/z^*(\boldsymbol{o}_k)}{\sum_{j=1}^P 1/z^*(\boldsymbol{o}_j)}\right\vert k\in\{1,2,\ldots,P\} \right\}$. Ordered crossover \cite{deep2011new} is then applied to create offspring. Subsequently, with a low probability $\beta$, mutation is performed to introduce diversity into the new population by shuffling individual offloading orders. The algorithm terminates upon meeting the stopping condition at which point it outputs the best solution found. In our simulations, we set the stopping condition for GA to be reaching a maximum number of generations, denoted as $G$.

\section{Simulation Studies} \label{sec:simulation}

In this section, we conduct simulation studies to evaluate the performance of the proposed  approaches. We start by describing the experiment setup in Sec. \ref{sec: Experiment Setting}. Next, we conduct two sets of studies to evaluate the optimality and efficiency of the proposed approaches in Sec. \ref{sec: optimality} and Sec. \ref{sec: efficiency}, respectively. We then investigate the impact of key parameters in Sec. \ref{sec:parameter_impact}, followed by an analysis of the effects of network characteristics.

\subsection{Experiment Setup} \label{sec: Experiment Setting}
We evaluate the proposed approaches on different network graphs generated using the method introduced in \cite {erdds1959random}. The network graphs are then transformed into tree topologies by using Dijkstra's algorithm. In each network topology, we configure the computing capacities $f_i$ of the servers by randomly generating values from the range of $[1, 10]$GHz. The values of the data transmission rates $R_{ij}$ are randomly generated from the range of $[10, 100]$Gbps. Moreover, we set $\gamma_i = 10^{-2}$, and $e_i = 30$dBm for all $i\in \mathcal{N}$. The task size is set to $Y = 1$ Gbits and $b = 10^6$ cycles/Gbit. All approaches are implemented in Python and evaluated on an Alienware Aurora R15 Gaming Desktop with a 12 Gen Intel i9 CPU and 64G of memory.

\subsection{Optimality Analysis} \label{sec: optimality}

We first evaluate the optimality of the two optimal approaches, CMO and PMO. For comparison, we implement the following four state-of-the-art distributed computing and computation offloading schemes as benchmarks:
\begin{itemize}
\item \textbf{Local computing (Local)}: In this approach, the master executes the entire task locally. 
\item \textbf{Partial offloading (Partial)}: In this approach, the master offloads part of the task to one of its one-hop neighbors. The offloading ratio and offloadee selection are optimized to minimize the task completion time. 
\item \textbf{Master-worker distributed computing (Master-worker)}: In this approach, the master distributes the task to its one-hop neighbors using the master-worker paradigm. The task allocation is optimized to minimize the task completion time. 
\item \textbf{Multi-hop offloading (Multi-hop)} \cite{gao2014opportunistic}: In this approach, the master offloads the whole task to the most powerful and reliable server in the network, which may be multiple hops away. 
\end{itemize}

Their performances are evaluated on four network graphs, which are transformed into trees with varying depths and breadths as illustrated in Fig. \ref{fig:topology}.

\begin{figure}[h]
    \centering
    \includegraphics[width=7cm]{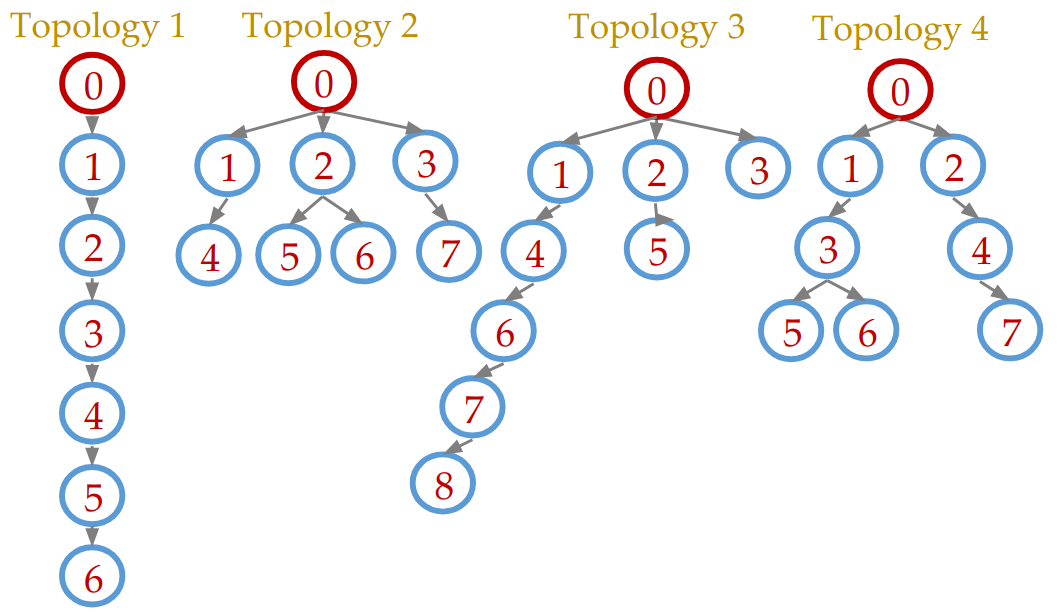}
    \caption{Network topologies evaluated in simulation studies.}
    \label{fig:topology}
\end{figure}

In the first experiment, we set the weights in the objective function to $w_1 = 1$ and $w_2 = 0$, which transforms the objective of our approach to minimize the task completion time only, just like the benchmarks. As shown in Fig.~\ref{fig:zero_weight}, our approaches outperform all benchmarks across all scenarios. Among the benchmarks, \textbf{Local} and \textbf{Multi-hop} have the poorest performance since they only use the computing resources from a single server. \textbf{Partial} outperforms local computing and \textbf{Multi-hop} by utilizing the resources from two servers. The \textbf{Master-worker}  achieves even better performance by utilizing computing resources from all servers within one hop. This experiment provides evidence that increasing the utilization of resources leads to better computing performance.    

In the second experiment, we randomly set the weights to $w_1=0.5$ and $w_2=0.05$, so that both computation efficiency and energy consumption are considered in our approaches. 
Note that these weight values are also used in the following experiments. Fig.~\ref{fig:cost} shows the comparison results, demonstrating the promising performance of our approaches in balancing task completion time and energy consumption. In Fig.~\ref{fig:time} and Fig.~\ref{fig:energy}, we show the task completion time, i.e., $\max_{i\in \mathcal{N}} T_i^{total}$, and the maximum energy consumption by any server, i.e., $\max_{i\in\mathcal{N}} E_i^{total}$, respectively. The result indicates that our approach outperforms all the benchmarks in both task completion time and maximum energy consumption.

\begin{figure}
  \centering
  \medskip
  \begin{subfigure}[t]{0.45\linewidth}
    \centering\includegraphics[width=1\linewidth]{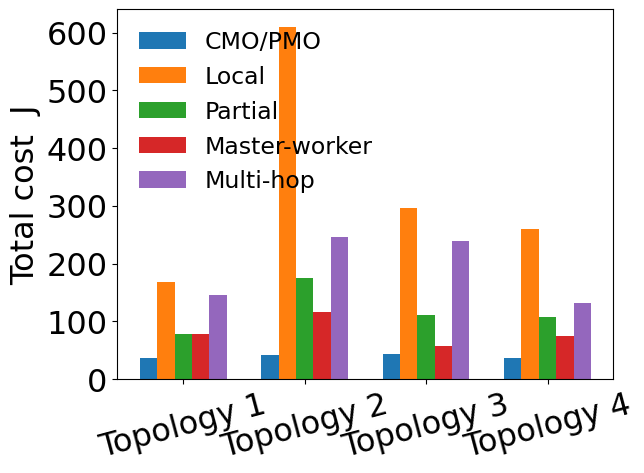}
    \caption{}
    \label{fig:zero_weight}
  \end{subfigure}
  \begin{subfigure}[t]{.45\linewidth}
    \centering\includegraphics[width=1\linewidth]{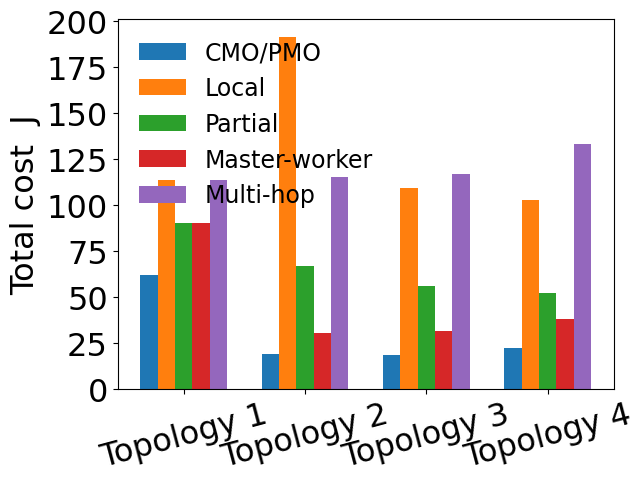}
    \caption{}
    \label{fig:cost}
  \end{subfigure}
  \caption{Total cost $J$ of different methods when considering (a) only time consumption; and (b) both time and energy consumption. }
\end{figure}

\begin{figure}[h]
  \centering
  \medskip
  \begin{subfigure}[t]{0.45\linewidth}
      \centering\includegraphics[width=1\linewidth]{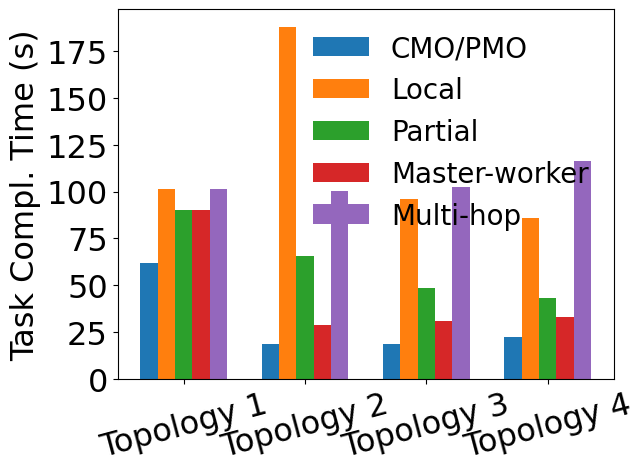}
    \caption{}
    \label{fig:time}
  \end{subfigure}
  \begin{subfigure}[t]{.45\linewidth}
    \centering\includegraphics[width=1\linewidth]{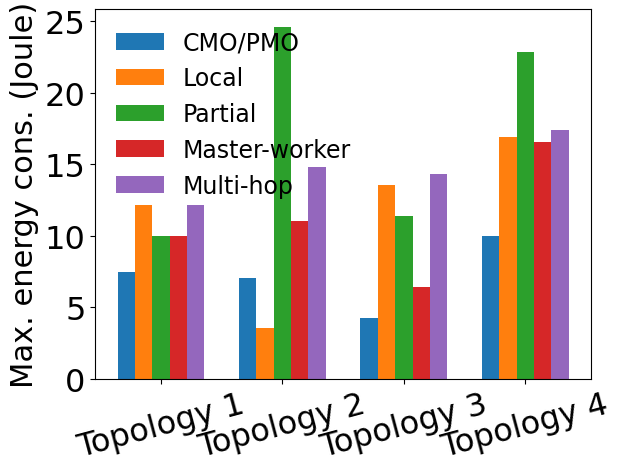}
    \caption{}
    \label{fig:energy}
  \end{subfigure}
  \caption{(a) Task completion time (b) Maximum energy consumption of different methods.} \label{energy and time 1} 
\end{figure}

\subsection{Efficiency Analysis} \label{sec: efficiency}

In this subsection, we evaluate the efficiency of the proposed optimal methods, CMO and PMO, as well as the proposed heuristic methods, NP, LP, and GA. For the implementation of the two worker selection methods, NP and LP, we first use them to prune the network tree, and then apply PMO to allocate tasks. GA is also implemented within the framework of PMO, and employed to determine the task allocation for each subtree, replacing CMO in Line 2 of Algorithm \ref{Algo:Greedy Flat Algorithm}.

\subsubsection{Small-Scale Networks}
We first consider the four small-scale network topologies depicted in Fig. \ref{fig:topology}. In this experiment, the threshold parameter $\theta_p$ in NP is set to $0.312, 0.43, 0.3, 0.4$,  respectively, such that one node in each topology is pruned. The threshold parameter $\xi$ in LP is set to $5, 1, 4, 2$ for the four topologies, respectively, resulting in the pruning of the last level of each topology.  For GA, the parameters are configured as $G=5, P=4, \alpha = 0.2, \beta=0.05$. 
To measure the efficiency of proposed approaches, we run each method 20 times and record the mean execution time, denoted as $T_{\text{exe}}$.

Fig. \ref{fig:Ourmethod_cost} shows the costs of the solutions found by the five methods. Comparing GA with the optimal methods, CMO and PMO, reveals that GA can find optimal solutions for small networks. This similarity in performance further demonstrates the optimality of GA for small networks. The worker selection methods, NP and LP, underperform compared to the other three methods, which is attributed to the reduced number of nodes involved in sharing the computational workload. Moreover, comparing the performance of LP across different topologies indicates that the extent of performance degradation is closely related to the proportion of nodes pruned from the network tree. Specifically, LP prunes 14.28\%, 57.14\%, 11.11\%, and 37.5\% of the nodes in the four topologies, respectively. The largest pruning proportion occur in  Topology 2, resulting in the 
maximum level of performance degradation. For NP, as only one node is ``pruned" in each topology, it performs better than LP in these scenarios.

The base-10 logarithm of the execution time, i.e., $\log_{10} T_{\text{exe}}$, of each method is shown in Fig. \ref{fig:Ourmethod_efficiency}. As expected, the optimal methods, CMO and PMO, are more time-consuming than the three heuristic methods. Moreover, PMO, being a parallelized version of CMO, significantly reduces execution time in Topologies 2-4 due to its parallelism. For Topology 1, since the root has a single subtree, PMO is equivalent to CMO. Among the heuristic methods, LP achieves the least execution time by pruning the most nodes and significantly reducing the search space. GA, on the other hand, is the least efficient and even underperforms PMO in Topologies 2 \& 4. This suggests that for small networks, PMO can be directly applied. Furthermore, comparing NP and PMO, we can observe that NP does not improve efficiency in all scenarios, despite reducing the number of workers. This is because only one node is ``pruned" in each topology, and the time saved by pruning is offset by the overhead generated by the pruning procedure.

\begin{figure}
  \centering
  \medskip
  \begin{subfigure}[t]{0.45\linewidth}
    \centering\includegraphics[width=1\linewidth]{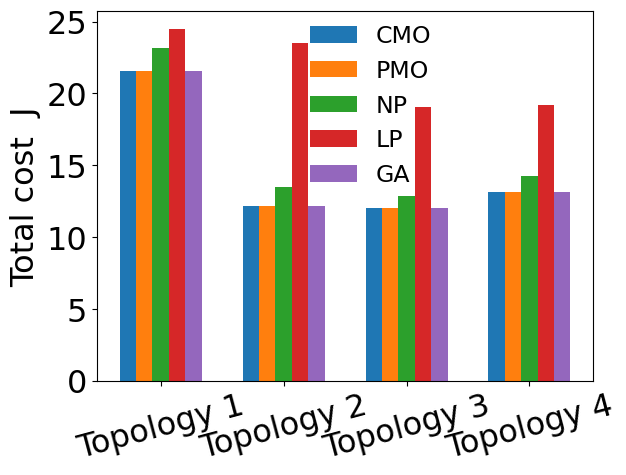}
    \caption{}
    \label{fig:Ourmethod_cost}
  \end{subfigure}
  \begin{subfigure}[t]{.45\linewidth}
    \centering\includegraphics[width=1\linewidth]{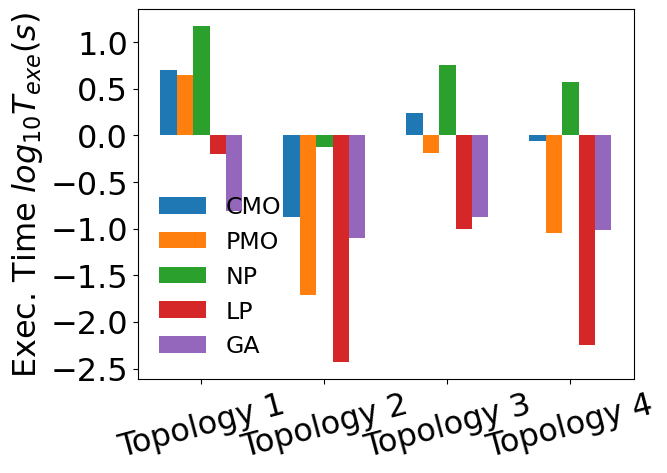}
    \caption{}
    \label{fig:Ourmethod_efficiency}
  \end{subfigure}
  \caption{(a) Total cost $J$ and (b) execution time of different methods for different network topologies.}
\end{figure}

As the performance of the proposed approaches largely depend on the network size, we further vary the network size by increasing the number of subtrees, $|\mathcal{I}_1|$, where each subtree consists of 2 levels and 1 node in each level. The scenario where the network size expands due to the growth of subtrees is explored in the subsequent subsection. 
In this experiment, we configure parameter $\theta_p$ in NP in a way such that one additional node is ``pruned" when including an additional subtree. Parameter $\xi$ in LP is set to $1$ in all cases, meaning that the node(s) in the last level are pruned. For GA, its parameters are configured as $P=4, G=100, \alpha = 0.2, \beta=0.05$. Fig. \ref{fig:NumOfSubtreesIncrease} shows the performance of different methods as the number of subtrees $|\mathcal{I}_1|$ increases. As we can see, increasing the number of subtrees results in a reduced total cost $J$, as more nodes are involved in sharing the computational workload. When comparing NP and LP, NP consistently outperforms LP. It's noteworthy that both methods prune the same number of nodes, each trimming one node from every subtree. This underscores the effectiveness of NP's worker selection process, which employs a more rigorous approach compared to LP that simply selects nodes at the top levels. However, the simplicity of LP makes it more efficient than NP, as shown in Fig. \ref{fig:NumOfSubtreesIncrease_efficiency}.
Additionally, from Fig. \ref{fig:NumOfSubtreesIncrease_efficiency}, we can observe a significant increase in the execution time of CMO as more subtrees are considered, compared to the other four methods. This is due to the parallelism inherent in the other four methods. Moreover, comparing PMO with the other methods further 
demonstrates the good performance of PMO in both optimality and efficiency in cases of small networks. 

\begin{figure}
  \centering
  \medskip
  \begin{subfigure}[t]{0.45\linewidth}
    \centering\includegraphics[width=1\linewidth]{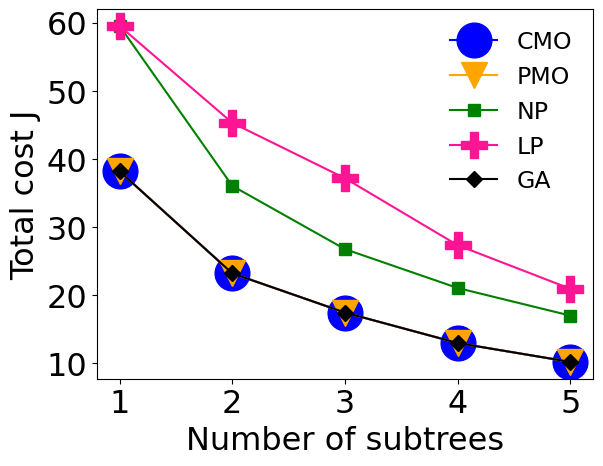}
    \caption{}
    \label{fig:NumOfSubtreesIncrease}
  \end{subfigure}
  \begin{subfigure}[t]{.48\linewidth}
    \centering\includegraphics[width=1\linewidth]{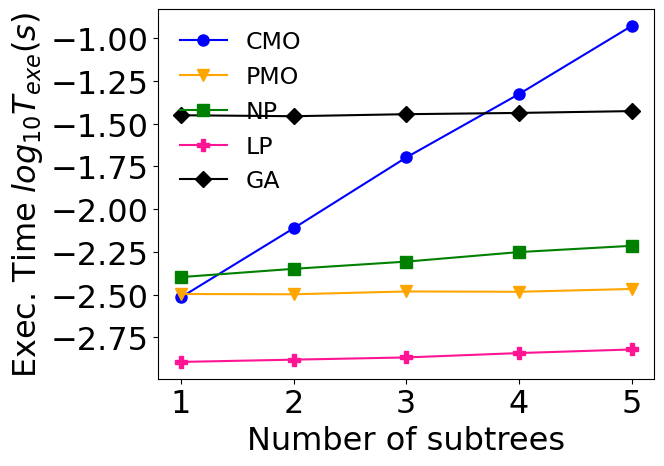}
    \caption{}
    \label{fig:NumOfSubtreesIncrease_efficiency}
  \end{subfigure}
  \caption{(a) Total cost $J$ and (b) execution time of different methods as the number of subtrees increases. } \label{fig:influence_of_subtrees}
\end{figure}

\subsubsection{Large-Scale Networks} \label{sec: GA strategy}
In this experiment, we consider larger networks and evaluate the performance of the three heuristic methods, NP, LP, and GA. For the implementation of NP and LP, GA is applied after pruning to determine the task allocation.  
Given that all these methods evaluate subtrees in parallel and their efficiency is bounded by the largest subtree, we evaluate their performance on networks with a single subtree. This approach allows us to avoid considering the impact of the number of subtrees. These networks are randomly generated with node counts of 10, 20, 30, and 50. The parameters in NP and LP are configured to prune a similar number of nodes as follows: the threshold $\theta_p$ in NP is set to $0.36, 0.45, 0.5, 0.38$, and the threshold $\xi$ in LP is set to $4, 3, 3, 12$ for the corresponding networks, respectively. The parameters in GA, applied across  all methods, are set to $P=4, G=100, \alpha = 0.2, \beta=0.05$.

As shown in Fig. \ref{fig:GA_cost}, the total cost $J$ generally decreases with the increase in network size for each method, as more nodes share the workload. Notably, the performance of GA degrades when the network size reaches 50. This is due to the large search space, making GA difficult to converge within 100 generations. Comparing the three methods, we can observe that GA outperforms the other two methods by considering all nodes in the network.  
NP generates better solutions than LP, although they prune roughly the same number of nodes. Additionally, NP and LP achieve performance comparable to GA in large networks (greater than 30 nodes), but with significantly lower execution times, as shown in Fig. \ref{fig:GA_efficiency}. This suggests that for large networks, NP and/or LP can be applied first to select a subset of workers, followed by GA for task allocation.

\begin{figure}
  \centering
  \medskip
  \begin{subfigure}[t]{0.45\linewidth}
    \centering\includegraphics[width=1\linewidth]{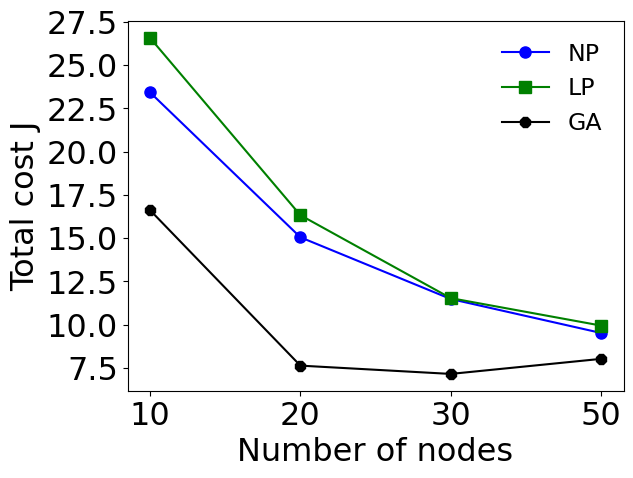}
    \caption{}
    \label{fig:GA_cost}
  \end{subfigure}
  \begin{subfigure}[t]{.45\linewidth}
    \centering\includegraphics[width=1\linewidth]{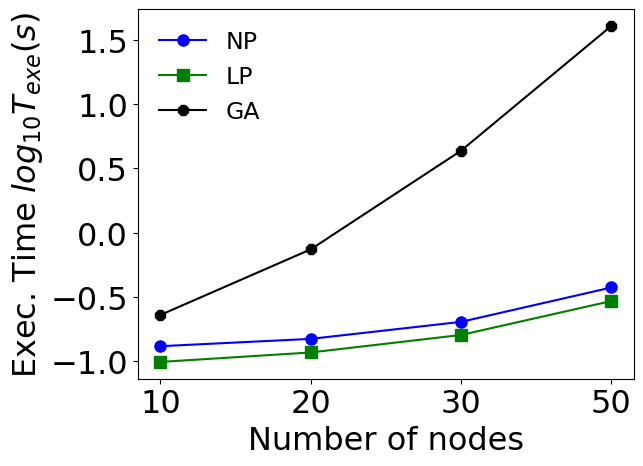}
    \caption{}
    \label{fig:GA_efficiency}
  \end{subfigure}
  \caption{(a) Total cost $J$ and (b) execution time of different heuristic methods as the number of nodes in the network increases. }
  \label{fig:GA performance}
\end{figure}

\subsection{Parameter Impact Analysis} \label{sec:parameter_impact}
In this subsection, we investigate the impact of key parameters in the proposed heuristic methods, including (1) the threshold $\theta_p$ in NP, (2) threshold $\xi$ in LP.
All experiments are conducted on the network with 20 nodes, as described in Sec. \ref{sec: GA strategy}. GA is employed for task allocation, using the same configuration detailed in Sec. \ref{sec: GA strategy}.

\subsubsection{Threshold $\theta_p$}
In NP, a worker is selected if the cost reduction from including this node exceeds the threshold $\theta_p$. Therefore, a higher threshold will result in fewer nodes being selected and more nodes being ``pruned". This is demonstrated by the results shown in Fig. \ref{fig:Node_compare_cost}. As we can see, the best performance is achieved when $\theta_p = 0$, in which case no nodes are ``pruned". The worst performance occurs at $\theta_p = 1$, where all workers are ``pruned" and all computations are done locally at the master. Moreover, as $\theta_p$ decreases, more workers are selected, resulting in a decrease in cost $J$ (see Fig. \ref{fig:Node_compare_cost}) but an increase in execution time $T_{\text{exe}}$ (see Fig. \ref{fig:Node_compare_time}). Notably, when $\theta$ is reduced to 0.4, $J$ tends to converge. This suggests that an appropriate value of $\theta_p$ that balances optimality and efficiency can be identified by selecting the value at which a sharp change in cost $J$ occurs.

\begin{figure}
  \centering
  \medskip
  \begin{subfigure}[t]{0.45\linewidth}
    \centering\includegraphics[width=1\linewidth]{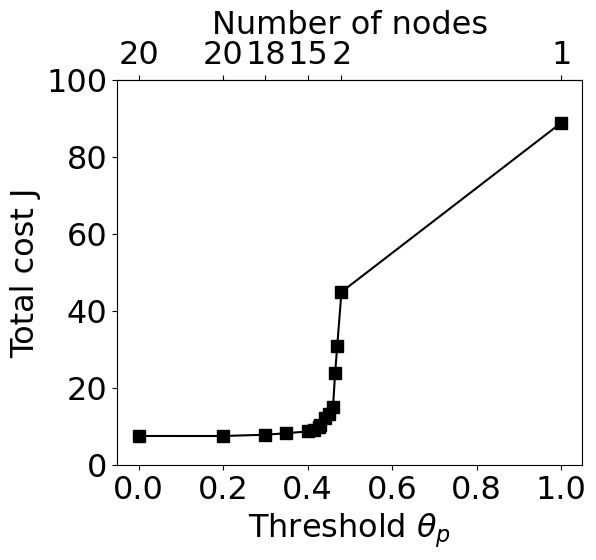}
    \caption{}
    \label{fig:Node_compare_cost}
  \end{subfigure}
  \begin{subfigure}[t]{.45\linewidth}
    \centering\includegraphics[width=1\linewidth]{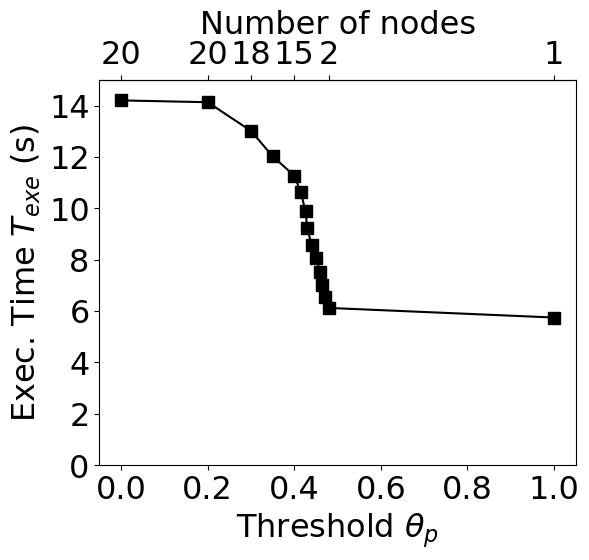}
    \caption{}
    \label{fig:Node_compare_time}
  \end{subfigure}
  \caption{(a) Total cost $J$ and (b) execution time of NP as the threshold $\theta_p$ increases. The upper $x$-axes show the number of nodes selected as workers.  }
  \label{fig:Node_threshold}
\end{figure}

\subsubsection{Threshold $\xi$}
LP selects all nodes in the top $\xi$ levels as workers.  
In the special case when $\xi=0$, no nodes are selected, resulting in all computations being conducted locally at the master. This leads to the highest cost $J$, as shown in Fig. \ref{fig:Experiment_B_redraw_cost}.  
As $\xi$ increases and more nodes are selected, performance improves, as indicated by the decreasing cost $J$. However, the performance improvement slows down when $\xi$ exceeds 3. The best performance is achieved when $\xi$ reaches its maximum value, $H$ (height of the network tree), which is 12 in this experiment. Given the rapid increase in execution time with higher $\xi$, as shown in Fig. \ref{fig:Experiment_B_redraw_time}, a proper value for $\xi$ can be chosen at the point where the rate of decrease in $J$ slows down.

\begin{figure}
  \centering
  \medskip
  \begin{subfigure}[t]{0.45\linewidth}
    \centering\includegraphics[width=1\linewidth]{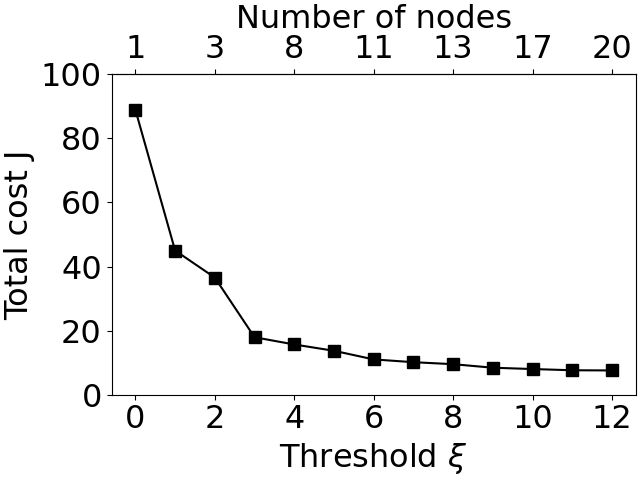}
    \caption{}
    \label{fig:Experiment_B_redraw_cost}
  \end{subfigure}
  \begin{subfigure}[t]{.45\linewidth}
    \centering\includegraphics[width=1\linewidth]{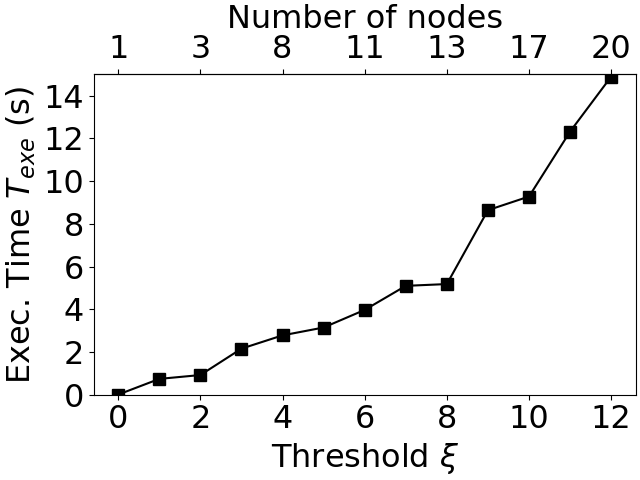}
    \caption{}
    \label{fig:Experiment_B_redraw_time}
  \end{subfigure}
  \caption{(a) Total cost $J$ and (b) execution time of LP as the threshold $\xi$ increases. } \label{fig:Level-forward analysis}
\end{figure}

\subsection{Impact of Network Characteristics} 
The optimal task distribution decisions highly rely on the network characteristics. In this subsection, we explore how communication and computing parameters, specifically $R_{i,j}$ and $f_i$, affect these decisions. For this analysis, we focus on Topology 4, as shown in 4 in Fig. \ref{fig:topology}. 

In the first experiment, we vary $R_{0,1}$, which represents the communication capacity between the master node (Node 0) and its left child (Node 1), from 0.3 Gbps to 10 Gbps. All other settings remain consistent with the previous studies. The optimal task allocation derived by PMO is shown in  Fig.\ref{fig:comm_impact}. As the figure demonstrates, when $R_{0,1}$ is small, communication becomes a bottleneck, preventing the master from offloading tasks to Node 1 or to its descendants. However, as $R_{0,1}$ increases, more workload is offloaded to Node 1. Once $R_{0,1}$ exceeds certain thresholds, tasks are also offloaded to Node 1's children and even grandchildren. With more nodes contributing to workload distribution, nodes in the right subtree of the master begin to receive fewer tasks. This study suggests that if a communication link is too slow, both the connected downstream node and its descendants may be pruned from the topology before executing CMO/PMO.
To understand the impact of the computing characteristic, we instead vary the computing power of Node 1, $f_1$, from 0.022 GHz to 21 GHz. As shown in Fig. \ref{fig:comp_impact}, the master starts to offload tasks to Node 1 when its computing power exceeds a certain threshold. Moreover, when it shares more workload, the workloads assigned to all other nodes decrease simultaneously. 

Notably, the network characteristics determine whether tasks are offloaded to a node, regardless of the total task size $Y$, as inferred from Theorem \ref{Thm:factor}. To demonstrate this, we vary $Y$ while keeping the network characteristics constant. Table 1 summarizes the optimal task allocations and the corresponding total costs computed by PMO. As shown, when $Y$ increases, both the workload assigned to each node $y_i$ and the total cost $J$ rise proportionally.

\vspace{-0.45cm}

\begin{figure}[h]
  \centering
  \medskip
  \begin{subfigure}[t]{0.45\linewidth}
    \centering\includegraphics[width=1\linewidth]{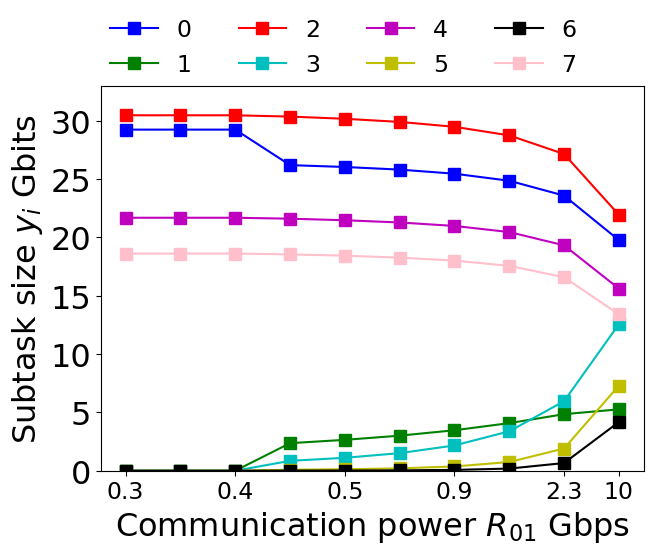}
    \caption{}
    \label{fig:comm_impact}
  \end{subfigure}
  \begin{subfigure}[t]{.45\linewidth}
    \centering\includegraphics[width=1\linewidth]{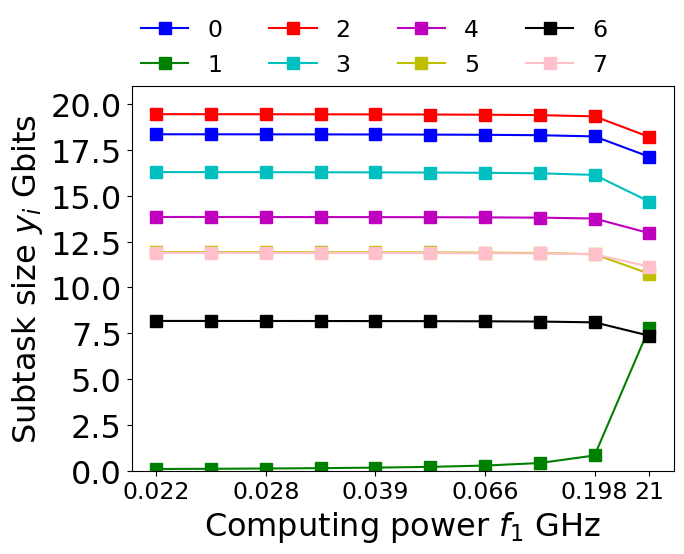}
    \caption{}
    \label{fig:comp_impact}
  \end{subfigure}
  \caption{Optimal task allocations at different values of (a) $R_{0,1}$ and (b) $f_1$.} \label{fig:offloading decision} 
\end{figure}





\begin{table}
  \includegraphics[width=\linewidth]{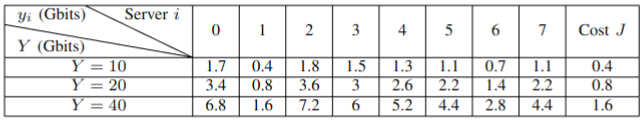}
\caption{Optimal task allocation and total cost for different values of $Y$}  \label{table: task size}
\end{table}

\section{Conclusion and Future Works}
\label{sec:conclusion}
This paper introduces a novel multi-layered distributed computing framework that expands the computing capabilities of networked computing systems. 
Unlike conventional distributed computing paradigms that limit resource sharing to one-hop neighborhoods, our framework explores layered network structures  to enable resource sharing beyond one-hop neighborhoods, effectively utilizing the resources of the entire network. To optimize system performance, we formulated an MIP problem that jointly minimizes task completion time and energy consumption through optimal task allocation and scheduling. Two exact methods, CMO and PMO, were proposed to solve this problem optimally, with PMO enhancing CMO's efficiency by exploiting the parallelism in task distribution across the master's subtrees. We also introduced an offline-online computation scheme to enable the real-time execution of CMO and PMO and allow them to handle dynamic networks with time-varying characteristics. To further enhance efficiency and scalability, three heuristic methods were introduced, including NP and LP for reducing the number of workers and GA for efficiently finding (sub-)optimal solutions. Simulation results demonstrate the superiority of our approaches over existing distributed computing and computation offloading schemes. Moreover, PMO shows promising performance in both optimality and efficiency for small networks. For larger networks, the results suggest applying NP or LP to reduce workers before using GA or PMO for task allocation. The results also show that NP outperforms LP in terms of optimality but is less efficient due to its more rigorous worker selection process. Additionally, studies on the impact of NP's and LP's parameters offer insights into their configurations. Lastly, the analysis of network characteristics highlights how the communication and computing capacities of individual servers influence task distribution decisions. 
In the future, we will extend this work to consider multi-task scenarios as well as dynamic and mobile networks. We will also investigate the hierarchical master-work paradigm.

\section*{Acknowledgment}
We would like to thank National Science Foundation under Grant CAREER-2048266 and CCF-2402689 for the support of this work.

\printbibliography

\begin{IEEEbiography}[{\includegraphics[width=1in,height=1.25in,clip,keepaspectratio]{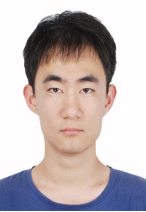}}]{Ke Ma} is currently a PhD candidate in the joint doctoral program of University of California, San Diego and San Diego State University. He received the B.Sc. degree in Communication Engineer from Donghua University, Shanghai, China, in 2019 and the M.S. degree in Computer Engineer from University of California at Riverside, California, in 2022.\end{IEEEbiography}

\begin{IEEEbiography}
[{\includegraphics[width=1in,height=1.25in,clip,keepaspectratio]{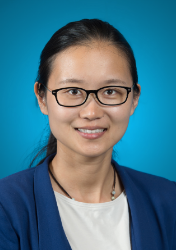}}]{Junfei Xie}(S'13-M'16-SM'21) is currently an Associate Professor in the Department of Electrical and Computer Engineering at San Diego State University. She received the B.S. degree in Electrical Engineering from University of Electronic Science and Technology of China (UESTC), Chengdu, China, in 2012. She received the M.S. degree in Electrical Engineering in 2013 and the Ph.D. degree in Computer Science and Engineering from University of North Texas, Denton, TX, in 2016. From 2016 to 2019, she was an Assistant Professor in the Department of Computing Sciences at Texas A\&M University-Corpus Christi. She is the recipient of the NSF CAREER Award. Her current research interests include large-scale dynamic system design and control, airborne networks, airborne computing, and air traffic flow management, etc.\end{IEEEbiography}

\end{document}